\title{Can We Make a Bohmian Electron Reach the Speed of Light,
    at Least for One Instant?}
\author{
Daniel V. Tausk\footnote{Departamento de Matem\'atica,
    Universidade de S\~ao Paulo,
    Rua do Mat\~ao 1010, Cidade Universit\'aria,
    CEP 05508-090 S\~ao Paulo - SP, Brazil.
    E-mail: tausk@ime.usp.br},
Roderich Tumulka\footnote{Department of Mathematics,
     Hill Center, Rutgers University,
     110 Frelinghuysen Road, Piscataway, NJ 08854-8019, USA.
     E-mail: tumulka@math.rutgers.edu}
}
\date{October 25, 2010}
\theoremstyle{plain}\newtheorem{teo}{Theorem}
\theoremstyle{plain}\newtheorem{prop}[teo]{Proposition}
\theoremstyle{plain}\newtheorem{lem}[teo]{Lemma}
\theoremstyle{plain}\newtheorem{cor}[teo]{Corollary}
\theoremstyle{definition}\newtheorem{defin}[teo]{Definition}
\newcommand{\be}{\begin{equation}}
\newcommand{\ee}{\end{equation}}
\newcommand{\Hilbert}{\mathscr{H}}
\newcommand{\RRR}{\mathbb{R}}
\newcommand{\R}{\mathbb{R}}
\newcommand{\dd}{\mathrm d}
\newcommand{\CCC}{\mathbb{C}}
\newcommand{\C}{\mathbb{C}}
\newcommand{\NNN}{\mathbb{N}}
\newcommand{\SSS}{\mathbb{S}}
\newcommand{\D}{\mathcal D}
\newcommand{\E}{\mathcal E}
\newcommand{\vA}{\boldsymbol{A}}
\newcommand{\vQ}{\boldsymbol{Q}}
\newcommand{\vq}{\boldsymbol{q}}
\newcommand{\vv}{\boldsymbol{v}}
\newcommand{\vp}{\boldsymbol{p}}
\newcommand{\vk}{\boldsymbol{k}}
\newcommand{\valpha}{\boldsymbol{\alpha}}
\newcommand{\diag}{\mathrm{diag}}
\newcommand{\Id}{\mathrm I}
\DeclareMathOperator{\Ker}{Ker}
\begin{document}
\maketitle
\begin{abstract}
In Bohmian mechanics, a version of quantum mechanics that ascribes world lines to electrons, we can meaningfully ask
about an electron's instantaneous speed relative to a given inertial frame. Interestingly, according to the relativistic
version of Bohmian mechanics using the Dirac equation, a massive particle's speed is less than or equal to the speed of
light, but not necessarily less. That is, there are situations in which the particle actually reaches the speed of light---a very non-classical behavior. That leads us to the question of whether such situations
can be arranged experimentally. We prove a theorem, Theorem~\ref{thm:Dirac}, implying that for generic initial wave functions the probability that the particle ever reaches the speed of light, even if at only one point in time, is zero. We conclude that the answer to the question is no.

Since a trajectory reaches the speed of light whenever the quantum probability current $\overline{\psi}\gamma^\mu\psi$ is a lightlike 4-vector, our analysis concerns the current vector field of a generic wave function and may thus be of interest also independently of Bohmian mechanics. The fact that the current is never spacelike has been used to argue against the possibility of faster-than-light tunnelling through a barrier, a somewhat similar question.

Theorem~\ref{thm:Dirac}, as well as a more general version provided by Theorem~\ref{thm:E}, are also interesting in their own right.
They concern a certain property of a function $\psi:\RRR^4\to\CCC^4$ that is crucial to the question of reaching the
speed of light, namely being transverse to a certain submanifold of $\CCC^4$ along a given compact subset of
space-time. While it follows from the known transversality theorem of differential topology that this property
is generic among smooth functions $\psi:\RRR^4\to\CCC^4$, Theorem~\ref{thm:Dirac} asserts that it is also generic among
smooth solutions of the Dirac equation.

\medskip

\noindent
 Key words:
 Bohmian mechanics;
 Dirac equation;
 instantaneous velocity;
 quantum probability current;
 generic functions;
 transverse intersections;
 transversality theorem of differential topology.
\end{abstract}

\tableofcontents

\section{Introduction}

In relativistic classical mechanics, the speed of a particle with positive rest mass (such as an electron) can come arbitrarily close to the speed of light $c$ but cannot actually reach it, not even for a single instant, as this would require an infinite amount of energy. In contrast, Bohm's law of motion for a single relativistic electron \cite{Bohm53}, formulated below, prohibits speeds greater than $c$, but not $c$ itself. In other words, there are situations in which a Bohmian electron has speed $c$. As we will see, an infinite amount of energy is not necessary for such situations. Since reaching speed $c$ would be a rather exciting event, the aim of this paper is to look into the conditions under which it occurs.
While this event does not lead to a problem for (or objection against) Bohmian mechanics, it is certainly a feature of Bohmian mechanics that is absent in classical mechanics. The question formulated in the title, i.e., whether these conditions could be arranged by an experimenter, is answered in the negative. We show that, for every external field and for generic initial wave function, the probability of the Bohmian electron to ever reach speed $c$ is zero, and we argue that every preparation procedure should be expected to result in a generic wave function.
Curiously, our arguments also show that a \emph{massless} spin-$\tfrac12$ particle will typically never reach speed $c$, either, in stark contrast to the classical situation; this conclusion also applies to the photon trajectories proposed by Slater \cite{Sla75}.
Our proof is based on a variant of the transversality theorem of differential topology; the main difficulty is to show
that a certain transversality property that is generic among all smooth functions is also generic among the smooth
solutions of the Dirac equation.

\section{Motivation and Relevance}

Bohmian mechanics \cite{Bohm52} is one of the most promising proposals so far for what the reality underlying quantum mechanics might be like; see \cite{DGTZ08,Gol01} for overview articles. While the usual formalism of quantum mechanics describes what observers will see, Bohmian mechanics describes an objective microscopic reality such that observers living in a Bohmian universe will see the statistics described by the quantum formalism. In other words, Bohmian mechanics is a quantum theory without observers and provides, in particular, a solution to the measurement problem of quantum mechanics.

To make clear the question in the title, readers should note that it is not about outcomes of measurements, but about the objective reality as postulated by Bohmian mechanics. Thus, the experiment we are considering is not one in which observers would see anything spectacular, but merely one in which, if Bohmian mechanics is correct, a particular electron has (very probably) reached speed $c$ at some (possibly unknown) time $t$ in the (known) time interval $[t_1,t_2]$. In particular, the experiment would be of no practical use. To anyone \emph{not} interested in the Bohmian version of quantum mechanics, the experiment may have nothing exciting to it. For that reason, some may regard such an experiment as irrelevant; to defend the question, let us give some perspective. To make a comparison, such an experiment would presumably seem more relevant than that of Enders and Nimtz \cite{Nimtz} creating a physical situation of tunnelling in which a certain proposed definition of \emph{tunnelling time} implies a speed inside the barrier greater than $c$; after all, while that definition of tunnelling time is quite arbitrary, Bohmian mechanics is a full-fledged theory that solves all paradoxes of quantum mechanics. (By the way, according to Bohmian mechanics the electron's speed inside the barrier does not exceed $c$.) To make another comparison, the character of the question in the title has similarities to that of the question whether we can, intentionally and artificially, create a black hole that will have closed timelike curves behind the horizon: For both questions, the existence of the spectacular phenomenon is only \emph{inferred} from hypothetical laws of nature, but remains unobservable to us (unless we enter the black hole).

\section{Bohmian Electrons}

A relativistic electron is usually associated with the Dirac equation \cite{DiracEq}
\begin{equation}\label{Dirac}
ic\hbar \gamma^\mu \Bigl(\partial_\mu - i\frac{e}{\hbar} A_\mu\Bigr) \psi = mc^2\psi\,,
\end{equation}
where $\psi:M\to \CCC^4$ is a spinor-valued function on Minkowski space-time $(M,g) = \bigl(\RRR^4, \diag(1,-1,-1,-1)\bigr)$, $x^0=ct$, $m$ is the electron's mass, $e$ its charge, and $A$ a vector field on space-time representing an external electromagnetic potential. According to Bohmian mechanics, in the appropriate version for the Dirac equation \cite{Bohm53}, the electron has a definite position $\vQ(t)$ at every time $t$, corresponding to a world line in space-time $\RRR^4$, and the equation of motion reads
\begin{equation}\label{Bohm}
\frac{d\vQ}{dt} = c\frac{\psi^\dagger \valpha \psi}{\psi^\dagger \psi}(t,\vQ(t))\,,
\end{equation}
where $\valpha = (\alpha^1,\alpha^2,\alpha^3)$ is the triple of the $4\times 4$ Dirac alpha matrices, and
\be
\phi^\dagger \psi = \sum_{s=1}^4 \phi_s^* \psi_s
\ee
is the inner product in $\CCC^4$. Equivalently to \eqref{Bohm}, writing $X=X^\mu$ for the space-time point $(t,\vQ(t))$,
\begin{equation}
\frac{dX^\mu}{d\tau} \propto j^\mu(X) =
\overline{\psi}(X)\, \gamma^\mu \, \psi(X)\,,
\end{equation}
where $\tau$ may be the proper time along the world line or, in fact, any other curve parameter, $j^\mu$ is known as the \textit{quantum probability current vector field} on $M$ associated with $\psi$, $\overline{\psi}=\psi^\dagger\gamma^0$, and $\gamma^\mu$ are the Dirac gamma matrices. Thus, the possible world lines are the integral curves of the vector field $j^\mu$ on $M$.

Since the vector field $j^\mu$ is defined in a covariant way, so are the Bohmian world lines. Moreover, it is known that $j^\mu j_\mu \geq 0$; that is, $j^\mu$ (and thus the world line) is everywhere timelike or lightlike. But it is not true in general that $j^\mu j_\mu >0$; thus, interestingly, the tangent to the world line can be lightlike, in which case the particle \emph{reaches} the speed of light (in every Lorentz frame)---even though the particle has positive rest mass!

According to Bohmian mechanics, the initial position $\vQ(0)$ is random with probability density $\vq\mapsto |\psi(t=0,\vq)|^2$,
thus defining a random world line. It then follows that, also at any other time $t$, the position $\vQ(t)$ has
probability density $\vq\mapsto |\psi(t,\vq)|^2$, and that, furthermore, this is also true in any other Lorentz frame \cite{TT05}.

\section{Spinors Leading to Speed $c$}
\label{sec:spinors}

Given any Lorentz frame, one can read off from \eqref{Bohm} which are the spinors $\psi \in \CCC^4$ for which the velocity $d\vQ/dt$ will point in (say) the $z$ direction with speed $c$: this occurs, and occurs only, for eigenvectors of the matrix $\alpha_z$ with eigenvalue 1. To see this, we use that $\alpha_z$ has eigenvalues $\pm 1$. As a consequence, $\psi^\dagger \alpha_z \psi$ is a real number in the interval $[-\psi^\dagger\psi,\psi^\dagger\psi]$, and the maximal value, which corresponds to speed $c$, is assumed when and only when $\psi$ is an eigenvector of $\alpha_z$ with eigenvalue $1$.

As the speed $|d\vQ/dt|$ cannot exceed $c$, the $x$- and $y$-components of \eqref{Bohm} are zero for eigenvectors of $\alpha_z$. It is known that each of the two eigenspaces $E_{\pm}$ has (complex) dimension 2. Explicitly, using the Dirac representation \cite{Gamma} of the $\alpha$ matrices,
\begin{equation}
\alpha_z = \left( \begin{array}{cc|cc} &&1&\\ &&&-1\\\hline
1&&&\\ &-1&& \end{array} \right)\,,
\end{equation}
the eigenspace $E_+$ for eigenvalue $+1$ is spanned by (the transposes of) $(1,0,1,0)$ and $(0,1,0,-1)$, while the eigenspace  $E_-$ for eigenvalue $-1$ is spanned by (the transposes of) $(1,0,-1,0)$ and $(0,1,0,1)$. Each spinor from $E_{+}$ leads to speed $c$ in the positive $z$ direction, and each from $E_{-}$ to speed $c$ in the negative $z$ direction.

For spinors in $E_+$ to occur it is not necessary that $\psi$ have infinite energy (in whichever sense) or, for that matter, negative energy. For example, think of a superposition of four different plane waves with positive energy. If, at a fixed space-time point $\tilde{x}$, the values of these four basis waves will be four linearly independent elements of $\CCC^4$, then by suitable choice of coefficients we can ensure that $\psi(\tilde{x})$ lies in $E_+$. Explicitly, it is known \cite[p.~9]{DiracEq} that
\be
\psi^{(1)}_{\vk}(x) =
\begin{pmatrix}a_-(\vk)\\0\\-\frac{a_+(\vk) k_3}{|\vk|}\\-\frac{a_+(\vk)(k_1+ik_2)}{|\vk|}\end{pmatrix}
e^{ik_\mu x^\mu}\,, \quad
\psi^{(2)}_{\vk}(x) =
\begin{pmatrix}0\\a_-(\vk)\\-\frac{a_+(\vk) (k_1-ik_2)}{|\vk|}\\\frac{a_+(\vk)k_3}{|\vk|}\end{pmatrix}
e^{ik_\mu x^\mu}
\ee
with
\be
k_\mu = \Bigl(\sqrt{m^2c^2+\hbar^2\vk^2}/\hbar,\vk\Bigr)
\,,\quad
a_\pm(\vk) = \sqrt{1\pm mc/\sqrt{m^2c^2+\hbar^2\vk^2}}
\ee
are two plane wave solutions of the Dirac equation with positive energy $\hbar ck_0$. For any $k>0$, the four functions
\be\label{eq:fourwaves}
\psi^{(1)}_{(0,0,k)}, \psi^{(2)}_{(0,0,k)}, \psi^{(1)}_{(k,0,0)},\psi^{(2)}_{(k,0,0)}
\ee
are linearly independent at any space-time point.

Let $S\subset\CCC^4$ be the set of spinors $\psi\in\CCC^4\setminus\{0\}$ such that the right hand side of Bohm's law \eqref{Bohm} is a vector of length $c$. Consider the partition
\be\label{Somega}
S = \bigcup_{\omega \in \SSS^2} S_\omega\,,
\ee
where $\omega$ is a unit vector in 3-space and $S_\omega$ denotes the set of spinors $\psi\neq 0$ such that the right
hand side of \eqref{Bohm} is $c\omega$. Like for the $z$ direction, it is true for every $\omega$ that $S_\omega\cup\{0\}$ is
a two-dimensional subspace (like $E_+$ above). Hence, in terms of dimensions over the real numbers, $S_\omega$ is a 4-dimensional subspace of the 8-dimensional space $\CCC^4$, and, as we will prove in Section~\ref{sec:proofs}:

\begin{prop}\label{prop:S}
The set $S$ is a $6$-dimensional (embedded) submanifold of $\C^4\cong\R^8$.
\end{prop}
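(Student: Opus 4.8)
The plan is to exhibit, near an arbitrary point $\psi_0\in S$, an explicit local parametrization of $S$ by six real coordinates: two for the direction $\omega\in\SSS^2$ of the resulting velocity, and four real ($=$ two complex) for the spinor within the corresponding eigenspace. I would carry out the argument representation-independently, using only that the $\alpha^i$ are Hermitian and obey $\alpha^i\alpha^j+\alpha^j\alpha^i=2\delta^{ij}\Id$.

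For $\omega\in\SSS^2$ put $\alpha(\omega)=\sum_i\omega_i\alpha^i$. Then $\alpha(\omega)^2=|\omega|^2\Id=\Id$, so $\alpha(\omega)$ is a Hermitian involution, $\CCC^4$ is the orthogonal direct sum $E_+(\omega)\oplus E_-(\omega)$ of its $\pm1$-eigenspaces (the $+1$-eigenspace being the $E_+$ of the text when $\omega=e_z$), each eigenspace has complex dimension $2$ (as noted in the text), and the spectral projections $P_\pm(\omega)=\tfrac12\bigl(\Id\pm\alpha(\omega)\bigr)$ depend polynomially, hence smoothly, on $\omega$. As explained in the text, $S_\omega=E_+(\omega)\setminus\{0\}$. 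The sets $S_\omega$ are pairwise disjoint, since the right-hand side $\vv(\psi)=c\,\psi^\dagger\valpha\psi/\psi^\dagger\psi$ of Bohm's law is a single-valued, smooth function of $\psi\ne 0$; hence on $S$ the direction $\omega=\vv(\psi)/c$ is a continuous function of $\psi$, and therefore $\bigcup_{\omega\in U}S_\omega$ is relatively open in $S$ for every open $U\subseteq\SSS^2$.

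Now fix $\psi_0\in S$ and let $\omega_0$ be the direction with $\psi_0\in S_{\omega_0}$. Choose an open neighborhood $U$ of $\omega_0$ in $\SSS^2$ and consider the smooth map
\[
\Psi\colon\ U\times E_+(\omega_0)\ \longrightarrow\ \CCC^4,\qquad \Psi(\omega,\eta)=P_+(\omega)\,\eta,
\]
whose domain is a real $6$-manifold and which satisfies $\Psi(\omega_0,\psi_0)=\psi_0$. Shrinking $U$, we may assume $P_+(\omega)$ restricts to a complex-linear isomorphism $E_+(\omega_0)\to E_+(\omega)$ for all $\omega\in U$, since this holds at $\omega_0$ (where it is the identity) and is an open condition. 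The core of the proof is then to check three things. (i) \emph{Injectivity of $\Psi$}: if $\Psi(\omega,\eta)=\Psi(\omega',\eta')\ne 0$, the common vector lies in $E_+(\omega)\cap E_+(\omega')$, forcing $\omega=\omega'$ by disjointness and then $\eta=\eta'$ by injectivity of $P_+(\omega)|_{E_+(\omega_0)}$. (ii) \emph{$\Psi$ is an immersion at $(\omega_0,\psi_0)$}: its differential sends $(v,\eta)\in\omega_0^\perp\oplus E_+(\omega_0)$ to $\tfrac12\alpha(v)\psi_0+\eta$ (from $\tfrac{d}{dt}P_+(\omega(t))\big|_{t=0}=\tfrac12\alpha(v)$ and $P_+(\omega_0)\eta=\eta$); if this vanishes, then $\alpha(v)\psi_0=-2\eta\in E_+(\omega_0)$, but the anticommutation $\alpha(v)\alpha(\omega_0)+\alpha(\omega_0)\alpha(v)=2(v\cdot\omega_0)\Id=0$ for $v\perp\omega_0$ shows $\alpha(v)\psi_0\in E_-(\omega_0)$, hence $\alpha(v)\psi_0\in E_+(\omega_0)\cap E_-(\omega_0)=\{0\}$; since $\alpha(v)^2=|v|^2\Id$ is invertible for $v\ne 0$ and $\psi_0\ne 0$, this forces $v=0$ and then $\eta=0$. (iii) \emph{$\Psi$ covers a neighborhood of $\psi_0$ in $S$}: any $\psi\in S$ near $\psi_0$ has $\omega(\psi)\in U$ and $\psi\in E_+(\omega(\psi))=P_+(\omega(\psi))\bigl(E_+(\omega_0)\bigr)$, so $\psi=\Psi\bigl(\omega(\psi),\eta(\psi)\bigr)$ with $\eta(\psi)=\bigl(P_+(\omega(\psi))|_{E_+(\omega_0)}\bigr)^{-1}\psi$ depending continuously on $\psi$ and equal to $\psi_0$ at $\psi_0$.

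To finish: by (ii) and the standard fact that an immersion is a local embedding, $\Psi$ restricts to an embedding on some neighborhood $W$ of $(\omega_0,\psi_0)$, which we may shrink so that $0\notin\Psi(W)$; thus $\Psi(W)$ is a $6$-dimensional embedded submanifold of $\CCC^4$ contained in $S$. By (i) and (iii), for $\psi\in S$ close enough to $\psi_0$ the pair $\bigl(\omega(\psi),\eta(\psi)\bigr)$ lies in $W$ (it depends continuously on $\psi$ and equals $(\omega_0,\psi_0)$ at $\psi_0$), whence $\psi\in\Psi(W)$; so $\Psi(W)$ agrees with $S$ in a neighborhood of $\psi_0$. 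Since $\psi_0$ was arbitrary, $S$ is a $6$-dimensional embedded submanifold of $\CCC^4\cong\RRR^8$. I expect the delicate steps to be the immersion computation in (ii) and the bookkeeping showing that $\Psi(W)$ is relatively open in $S$ (step (iii) together with the final paragraph), rather than merely contained in it; the remainder is routine manipulation of the projections $P_\pm(\omega)$.
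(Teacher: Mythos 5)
Your proof is correct. It rests on the same geometric picture as the paper's --- $S$ is the total space, minus the zero section, of the rank-2 eigenbundle $\omega\mapsto E_+(\omega)$ over $\SSS^2$, and the direction $\omega$ is recovered from $\psi\in S$ via the smooth velocity map $\mathbf v$ --- but the execution is genuinely different. The paper works globally: it observes that the kernel of the constant-rank vector bundle morphism $T(\omega,\psi)=(\omega,(\alpha_\omega-\Id)\psi)$ on $\SSS^2\times\CCC^4$ is a vector subbundle $E$, hence a $6$-dimensional embedded submanifold, and then shows that the projection $f:E'\to\CCC^4$ is an embedding in one stroke via the left-inverse trick $g=(\mathbf v,\mathrm{id})$ (Lemma~\ref{thm:lemembedding}: $g\circ f$ an embedding implies $f$ an embedding), which delivers injectivity, the immersion property, and the topological embedding condition simultaneously. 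You instead build explicit local parametrizations $\Psi(\omega,\eta)=P_+(\omega)\eta$ using the smooth spectral projections and verify the immersion property by a direct Clifford-algebra computation ($\alpha(v)\psi_0\in E_-(\omega_0)$ for $v\perp\omega_0$, plus $\alpha(v)^2=|v|^2\Id$), then handle injectivity and relative openness by hand using the continuity of $\omega(\psi)$ and $\eta(\psi)$. Your route is more elementary and self-contained (no appeal to the subbundle theorem) and makes the tangent space of $S$ explicit, at the cost of more bookkeeping; the paper's route is shorter and cleaner but leans on standard bundle machinery. One cosmetic remark: your step (i) as stated does not give injectivity of $\Psi$ on all of $U\times E_+(\omega_0)$ (the whole set $U\times\{0\}$ maps to $0$), but this is harmless since your final argument only uses $\Psi$ on a neighborhood $W$ of $(\omega_0,\psi_0)$ with $0\notin\Psi(W)$, where the argument of (i) does apply.
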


\section{Discussion of the Question in the Title}

Whether we humans (as experimenters) can \emph{make} an electron reach the speed of light, amounts to the question of whether, by preparation of a suitable initial wave function at time $t_1$ and external field $A_\mu$ in the time interval $[t_1,t_2]$, we can arrange a situation in which the Bohmian electron is certain, or at least has significant probability $p$, to reach the speed of light at some instant during the time interval $[t_1,t_2]$. Put differently, $p$ is the probability that the random Bohmian world line passes through a space-time point $x\in [t_1,t_2]\times\RRR^3$ where $j^\mu(x)$ is lightlike, i.e., where $\psi(x)\in S$, or equivalently, where $x$ lies in the set
\be
\Sigma := \bigl\{x\in M: \psi(x) \in S\bigr\}=\psi^{-1}(S)\,.
\ee
Using that $S$ has codimension 2, we will show in Theorem~\ref{thm:Dirac} below that, for generic wave functions $\psi$,
the set $\Sigma$ is a submanifold of space-time $M=\RRR^4$ with codimension 2, i.e., $\dim \Sigma=2$.
However, in that case the probability $p$ that the electron will ever reach the speed of light is zero, as we will
prove in Section~\ref{sec:proofs}:

\begin{prop}\label{prop:zero}
If $\Sigma \subset \RRR^4$ is a submanifold of dimension 2 then the probability that the Bohmian random world line intersects $\Sigma$ is zero.
\end{prop}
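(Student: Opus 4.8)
The plan is to show that the set of initial positions $\vQ(0)=\vq$ that lead (via the Bohmian flow) to a world line hitting $\Sigma$ is a $|\psi(0,\cdot)|^2$-null subset of the time-zero slice $\{0\}\times\RRR^3$; since the initial position is distributed with density $\vq\mapsto|\psi(0,\vq)|^2$ and the full measure is $1$, this gives $p=0$. The natural mechanism is that the Bohmian flow maps a null set to a null set: the time-evolution of the position distribution is equivariant (the position at time $t$ has density $|\psi(t,\cdot)|^2$), so it suffices to work on a single time slice once we know how to pull the hitting-set back to $t=0$.

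\medskip

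\noindent\textbf{Step 1: reduce $\Sigma$ to a countable union of compact pieces.} A $2$-dimensional submanifold of $\RRR^4$ is second countable, hence a countable union of compact coordinate patches; by countable subadditivity of probability it is enough to treat a single compact piece $K\subset\Sigma$. \textbf{Step 2: slice $K$ by time.} For each time $t$, let $K_t=\{\vq:(t,\vq)\in K\}$; then $K_t$ is a compact subset of $\RRR^3$ and, because $\dim K=2$, for a full-measure (indeed co-countable, by a Sard/Fubini-type argument) set of $t$ the slice $K_t$ has three-dimensional Lebesgue measure zero — in fact, outside a measure-zero set of times the slice is contained in a $1$-dimensional set. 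Thus the ``spatial footprint'' of $K$ at almost every time is Lebesgue-null in $\RRR^3$, hence $|\psi(t,\cdot)|^2$-null. \textbf{Step 3: transport back to $t=0$.} A world line meeting $K$ passes, at some time $t$, through a point of $K_t$; if we denote by $\Phi_t$ the Bohmian flow from the slice at time $0$ to the slice at time $t$ (defined wherever the integral curve of $j$ exists), then the set of bad initial positions is $\bigcup_{t}\Phi_t^{-1}(K_t)$. Using equivariance of $|\psi|^2$ under $\Phi_t$, each $\Phi_t^{-1}(K_t)$ is $|\psi(0,\cdot)|^2$-null for a.e.\ $t$; a continuity/measurability argument (the flow is continuous in $(t,\vq)$ where defined, and $K$ is compact) lets one replace the uncountable union over $t$ by a countable one, or alternatively apply Fubini on $[t_1,t_2]\times\RRR^3$ with the space-time measure $|\psi(t,\vq)|^2\,\dd t\,\dd^3q$ directly to the set $\{(t,\vq):\text{the world line through }(t,\vq)\text{ meets }K\}$, whose $t$-slices are null.

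\medskip

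\noindent\textbf{The main obstacle} is the regularity of the Bohmian flow: the velocity field $c\,\psi^\dagger\valpha\psi/\psi^\dagger\psi$ is singular where $\psi=0$, so integral curves need not be globally defined, not unique in general, and $\Phi_t$ need not be a diffeomorphism. The clean way around this is \emph{not} to use differentiability of $\Phi_t$ at all, but to argue measure-theoretically: the equivariance theorem of \cite{TT05} says precisely that the random space-time point $(t,\vQ(t))$, with $t$ itself chosen with density proportional to $\int|\psi(t,\vq)|^2\,\dd^3q=1$ (i.e.\ uniformly on $[t_1,t_2]$ after normalization), has distribution absolutely continuous with respect to Lebesgue measure on $M$ — this is really just the statement that $|\psi|^2\,\dd^4x$ is the law of the process sampled at a uniformly random time. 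Consequently the probability that the world line meets the \emph{fixed} Lebesgue-null set $\Sigma$ is zero: $\PPP(\exists\,t\in[t_1,t_2]:(t,\vQ(t))\in\Sigma)$ is bounded, after averaging over a uniform $t$, by $\int_\Sigma|\psi|^2\,\dd^4x/|t_2-t_1|\cdot(\text{const})$, which vanishes since $\Sigma$ has measure zero in $\RRR^4$; one then removes the averaging over $t$ by noting that ``meets $\Sigma$ at some time'' is a tail-type event not improved by randomizing $t$, or simply by covering $[t_1,t_2]$ with finitely many short intervals and iterating. I expect the write-up to spend most of its effort making this last reduction — from ``null in space-time'' to ``hit with probability zero by a continuous-time process'' — fully rigorous, since it is exactly here that the non-smoothness of the flow could bite.
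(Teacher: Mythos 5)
There is a genuine gap, and it sits exactly where you predicted the effort would be needed: the passage from ``null in each time slice (or in space-time)'' to ``hit with probability zero.'' Neither of your two proposed mechanisms works. (a) The bad set $\bigcup_{t}\Phi_t^{-1}(K_t)$ is an uncountable union of null sets, and no ``continuity/measurability argument'' reduces it to a countable one; that reduction is the whole difficulty, not a routine step. (b) The bound $\PPP\bigl(\exists\,t\in[t_1,t_2]:(t,\vQ(t))\in\Sigma\bigr)\le C\int_\Sigma|\psi|^2\,\dd^4x$ is false: averaging over a uniformly random time $t$ controls only the \emph{expected occupation time} of $\Sigma$, not the hitting probability, and there is no way to ``remove the averaging'' afterwards. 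The paper's own discussion supplies the counterexample: a codimension-one $\Sigma$ (e.g.\ the Cauchy surface $\{t=s\}$ with $\psi(s,\cdot)$ valued in $S$) has four-dimensional Lebesgue measure zero, yet is intersected by \emph{every} world line, so $p=1$. Hence ``$\Sigma$ is Lebesgue-null in $\RRR^4$'' cannot possibly imply $p=0$; the hypothesis $\dim\Sigma=2$ (rather than $3$) must enter through a different mechanism, and in your write-up it never actually does.

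The mechanism the paper uses, and which your argument is missing, is a smooth-image dimension count on the initial slice. For $x\in\Sigma$ one has $\psi(x)\in S$, hence $\psi(x)\neq0$ and $j(x)\neq0$, so there is a unique maximal integral curve $C_x$ through $x$; being causal, it meets the hyperplane $\{t=t_1\}$ at most once. On the open set $U\subseteq\Sigma$ of points $x$ for which $C_x$ does meet $\{t=t_1\}$, the map $f:U\to\{t=t_1\}$ sending $x$ to that intersection point is smooth (implicit function theorem). A trajectory intersects $\Sigma$ if and only if $(t_1,\vQ(t_1))\in f(U)$, and $f(U)$ is the image of a $2$-dimensional manifold under a smooth (hence locally Lipschitz) map into a $3$-dimensional hyperplane, therefore Lebesgue-null; integrating $|\psi(t_1,\cdot)|^2$ over it gives $p=0$. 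Note that this argument flows \emph{from} $\Sigma$ \emph{back} to the initial slice, starting only at points where $\psi\neq0$, so the singularities of the velocity field at zeros of $\psi$ --- which you flagged as the main obstacle and then tried to route around measure-theoretically --- never arise. Your Step 2 is also a red herring: every time-slice of a $2$-dimensional submanifold is automatically Lebesgue-null in $\RRR^3$ (Hausdorff dimension $\le 2<3$), but what must be shown to be null is not a slice of $\Sigma$ but the projection of $\Sigma$ along the flow onto a single slice, and that is where the smoothness of $f$ and the inequality $2<3$ do the real work.
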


For comparison, if $\Sigma$ had codimension 1 (i.e., were a hypersurface), then it could be the case that \emph{every} world line had to intersect $\Sigma$---for example, this is the case if $\Sigma$ is a Cauchy surface. There are such wave functions indeed: for example, $\Sigma$ can be the surface where $t=s$ (in the given Lorentz frame, with a constant $t_1<s<t_2$); to see that such wave functions exist, think of $s$ as the ``initial'' time (for which ``initial'' data of the Dirac equation are specified) and choose the wave function $\psi(s,\vq)$ so that $\psi(s,\vq)\in S$ for every $\vq \in \RRR^3$. Thus, if we could experimentally prepare a wave function of this kind then we would know that the electron's speed $|d\vQ/dt|$ has probability 1 to reach $c$, and indeed must do so at time $s$. However, to prepare a wave function of this kind would require infinite accuracy, as $S$ is a (codimension-2) submanifold, and if the relation $\psi(s,\vq)\in S$ did not hold exactly but merely up to some error (say, the distance of $\psi(s,\vq)$ from $S$ is less than $\varepsilon$) then the speed of the trajectory passing through $(s,\vq)$ would be close but not equal to $c$. It is certainly beyond experimental control whether, for any particular $x\in M$, the relation $\psi(x)\in S$ exactly holds. Therefore, we cannot be confident, for any experimental procedure, that it succeeded in preparing an exceptional wave function for which $p>0$. On the contrary, we should be confident that the procedure prepared a wave function with $p=0$ because they are generic. We conclude that the answer to the question in the title is negative and that, for all wave functions that we can expect to prepare, the probability $p$ of the electron ever reaching speed $c$ is zero.

A further remark concerns the potential $A_\mu$: Just as the initial wave function cannot be experimentally prepared with unlimited precision, the potential cannot either. However, Theorem~\ref{thm:Dirac} implies that already the imprecision in the initial wave function alone suffices to make it impossible to arrange a situation in which $p>0$.

Another remark concerns the global existence of the Bohmian trajectories: A trajectory may fail to exist for all times, and this happens if the trajectory runs
into a zero of the wave function. It is known \cite{TT05} that global existence holds (i.e., the trajectory never hits a zero) for \emph{almost all} initial
positions $\vQ(0)$ at which $\psi(t=0,\cdot)$ is nonzero. Now it follows from our Corollary~\ref{cor:Gdelta} below that generically even more is true:
the wave function is nowhere zero, and thus global existence holds for \emph{all} initial positions.

\section{Genericity Theorem}\label{sec:thm}

The class of wave functions we consider in our genericity theorem, Theorem~\ref{thm:Dirac} below, are those which are,
at the initial time $t_1$, smooth and square-integrable \cite{foot1}. We assume that the external
potential $A_\mu$ is smooth as well. It is then known \cite{Chernoff} that, as a consequence of the Dirac equation,
$\psi$ is a smooth function on space-time, and $\psi(t,\cdot)$ is square-integrable at any time $t$. Moreover, since
the experiment must take place inside some bounded subset of space $C\subset \RRR^3$, it also takes place inside
a compact subset of space-time $K=[t_1,t_2]\times \overline{C}$; thus, it suffices to consider $\Sigma\cap K$
instead of $\Sigma$ \cite{foot2}.

It is a known fact \cite[Chap.~1.3, Thm.~3.3]{Hir76} (indeed, independent of the Dirac equation, requiring only smoothness) that $\Sigma=\psi^{-1}(S)$ will be an (embedded) submanifold of the same codimension as $S$ if $\psi$ is \emph{transverse} to $S$.

\begin{defin}
For any $d\in\NNN$, two linear subspaces of $\RRR^d$ are called \emph{transverse} to each other if and only if they together span $\RRR^d$. Let $M$, $N$ be differentiable manifolds, $S$ a submanifold of $N$ and $f:M\to N$ a smooth map.
We say that $f$ is {\em transverse\/} to $S$ if and only if for all $x\in M$ with $f(x)\in S$, the tangent space to $S$, $T_{f(x)}S\subseteq T_{f(x)}N$, is transverse to the image of the linear mapping $\dd f_x$ that is the first derivative of $f$ in $x$, i.e., if and only if
\begin{equation}\label{eq:deftrans}
\dd f_x(T_xM)+T_{f(x)}S=T_{f(x)}N,
\end{equation}
for all $x\in f^{-1}(S)$. Given a subset $K$ of $M$, we say that $f$ is {\em transverse to $S$ along $K$\/}
if \eqref{eq:deftrans} holds for all $x\in f^{-1}(S)\cap K$.
\end{defin}

We aim at showing that transversality is a generic property. Let us leave aside the Dirac equation for a moment and
consider a generic smooth function $\psi:\RRR^4\to\CCC^4$. Such a $\psi$ is indeed transverse to $S$ according to
the \emph{transversality theorem} of differential topology \cite[Chap.~3.2, Thm.~2.1]{Hir76}. The relevant sense of
``generic'' here and in the following is that the set of $\psi$ that are transverse to $S$ \emph{contains a set that is
open and dense}. The topology used here and in the following is the so-called smooth weak topology, defined as follows.

\begin{defin}
Given an open subset $U$ of $\R^m$, denote by $C^\infty(U,\R^n)$ the space of all smooth maps from $U$ to $\R^n$. The
{\em smooth weak topology\/} of $C^\infty(U,\R^n)$ is the topology of uniform convergence of all partial derivatives over all compact subsets
of $U$; more precisely, it is the topology defined by the following family of semi-norms:
\be\label{eq:seminorms}
\Vert f\Vert_{K,\alpha}=\sup_{x\in K}\Vert\partial_\alpha f(x)\Vert\,,
\ee
where $K$ runs over the compact subsets of $U$ and $\alpha=(\alpha_1,\ldots,\alpha_m)$ runs over the $m$-tuples of non-negative integers.
\end{defin}

Returning to the transversality theorem, it asserts that \textit{if $M$ and $N$ are smooth manifolds and $Q\subseteq N$ is a closed submanifold then the set of smooth functions $\psi:M\to N$ that are transverse to $Q$ along a given compact set $K$ is open and dense in the set $C^\infty(M,N)$ (of all smooth functions $M\to N$) with respect to the smooth weak topology.} (For the definition of the smooth weak topology on manifolds, see \cite[Chap.~2.1]{Hir76}.)

For our case, the transversality theorem implies that the set of smooth functions $\psi:\RRR^4\to\CCC^4$ that are transverse to $S$ along $K$ contains a dense open set.
(Since $S$ is not closed, $\overline{S}=S\cup\{0\}$, this is actually not immediate from the transversality theorem;
it follows, however, directly from Lemma~\ref{thm:lemmaopen} in Section~\ref{sec:proofs} below, which follows from the
transversality theorem.)

Of course, what is relevant to our purpose is not the behavior of functions that are generic among
\emph{all smooth functions} but rather that of functions that are generic among
\emph{all solutions of the Dirac equation}. That is why we formulate our main result as follows.

\begin{teo}\label{thm:Dirac}
Let $\D_A$ be the space of those smooth solutions $\psi$ of the Dirac equation \eqref{Dirac} with potential
$A=A_\mu(x)$ such that $\psi(t,\cdot)$ is square-integrable for every $t\in\RRR$. For every smooth potential $A$ and every
compact subset $K$ of space-time, the subset of functions $\psi\in\D_A$ that have no zeroes in $K$ and are transverse to
$S$ (as defined in Section~\ref{sec:spinors}) along $K$ is open and dense in $\D_A$ relative to the smooth weak topology. In particular, the subset of functions
$\psi\in\D_A$ that are transverse to $S$ along $K$ contains a set that is open and dense with respect to the smooth weak
topology.
\end{teo}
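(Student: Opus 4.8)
The plan is to reduce the assertion to the classical (parametric) transversality theorem of differential topology; the one genuinely new ingredient is that every perturbation of a solution $\psi$ must be carried out \emph{within} the linear space $\D_A$. For openness, note that $\D_A\subseteq C^\infty(\RRR^4,\CCC^4)$ carries the subspace topology, so it suffices to show that the set of $\psi\in C^\infty(\RRR^4,\CCC^4)$ having no zero in $K$ and transverse to $S$ along $K$ is open in the smooth weak topology, and then to intersect with $\D_A$. This is the content of Lemma~\ref{thm:lemmaopen} of Section~\ref{sec:proofs}; the mechanism is that such a $\psi$ satisfies $m:=\min_{x\in K}|\psi(x)|>0$ and therefore maps a suitable compact neighbourhood of $K$ into an open spherical shell $\Omega=\{v\in\CCC^4:a<|v|<b\}$ with $0\notin\Omega$, so that $\overline S\cap\Omega=(S\cup\{0\})\cap\Omega=S\cap\Omega$ is a \emph{closed} submanifold of the manifold $\Omega$; one then invokes the ordinary transversality theorem for $S\cap\Omega$, while the properties ``no zero in $K$'' ($C^0$-stable) and ``transverse to $S$ along $K$'' survive sufficiently small perturbations in the smooth weak topology.

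The substantial part is density. Fix $\psi_0\in\D_A$ and a smooth-weak neighbourhood $\mathcal N$ of $\psi_0$. The key step is that for every space-time point $x_0=(t_0,\vq_0)$ the evaluation map $\mathrm{ev}_{x_0}\colon\D_A\to\CCC^4$, $\psi\mapsto\psi(x_0)$, is surjective: given $v\in\CCC^4$ one prescribes at time $t_0$ the smooth, square-integrable initial datum $\vq\mapsto v\,\eta(\vq)$, where $\eta$ is a bump function equal to $1$ near $\vq_0$, and by the well-posedness and smoothness theory for the Dirac equation with smooth potential \cite{Chernoff} the resulting solution lies in $\D_A$ and equals $v$ at $x_0$. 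Hence one may choose four solutions in $\D_A$ that are $\CCC$-linearly independent at $x_0$; they remain linearly independent on an open neighbourhood $U_{x_0}$ of $x_0$. Covering the compact set $K$ by finitely many $U_{x_0^1},\dots,U_{x_0^\ell}$ and letting $W\subset\D_A$ be the finite-dimensional span of all the solutions thus produced, one obtains an open set $U=\bigcup_i U_{x_0^i}\supseteq K$ on which $\mathrm{ev}_x|_W\colon W\to\CCC^4$ is surjective for every $x\in U$.

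Now I would consider the smooth map $F\colon U\times W\to\CCC^4$, $F(x,\chi)=\psi_0(x)+\chi(x)$, whose slices $F(\cdot,\chi)=\psi_0+\chi$ all lie in $\D_A$ because $W\subseteq\D_A$ is a linear subspace. Since the partial derivative of $F$ in the $\chi$-direction at $(x,\chi)$, namely $\eta\mapsto\eta(x)=\mathrm{ev}_x|_W(\eta)$, is already surjective, the full differential $\dd F_{(x,\chi)}$ is surjective, so $F$ is a submersion and hence transverse to every submanifold of $\CCC^4\cong\RRR^8$, in particular to $S$ and to $\{0\}$. By the parametric transversality theorem \cite[Chap.~3.2, Thm.~2.1]{Hir76}, for all $\chi$ in a dense subset of $W$ the map $\psi_0+\chi$ is transverse along $K$ both to $S$ and to $\{0\}$; transversality to $\{0\}$ along $K$ just says $\psi_0+\chi$ has no zero in $K$, because $\dd(\psi_0+\chi)_x\colon\RRR^4\to\RRR^8$ is never onto. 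Since $W$ is finite-dimensional, the affine map $\chi\mapsto\psi_0+\chi$ is continuous for the smooth weak topology, so the preimage of $\mathcal N$ is an open neighbourhood of $0$ in $W$ and therefore meets the dense set just found; any such $\chi$ yields $\psi_0+\chi\in\mathcal N\cap\D_A$ with no zero in $K$ and transverse to $S$ along $K$. This gives the asserted density, and the final sentence of the theorem is then an immediate weakening.

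I expect the main obstacle to be precisely the surjectivity of the pointwise evaluations $\mathrm{ev}_{x_0}$ on $\D_A$: it is what certifies that the constrained family of Dirac solutions is still ``flexible enough'' for the transversality machinery, and it rests on the existence--uniqueness--regularity theory for the Dirac equation with a smooth external potential. The accompanying bookkeeping --- that smooth square-integrable data indeed stay in $\D_A$ by \cite{Chernoff}, that the smooth weak topology on $\D_A$ is the subspace topology, and that the compact-set and manifold-with-boundary forms of the transversality theorems apply --- is routine but must be carried out with care; the remainder is just a localisation of the standard transversality argument to a finite-dimensional space of solution perturbations.
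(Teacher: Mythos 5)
Your proposal is correct and follows essentially the same route as the paper: the paper abstracts the argument into Theorem~\ref{thm:E}, whose proof consists of exactly your ingredients --- openness via restriction to a neighbourhood of $K$ avoiding $0$ where $S$ is closed (Lemma~\ref{thm:lemmaopen}), surjectivity of the evaluation maps on $\D_A$ from the initial-value theory of the Dirac equation, a finite-dimensional perturbation space obtained by compactness (Corollary~\ref{thm:corfinitesubspace}), the parametric transversality theorem applied to the submersion $(x,\chi)\mapsto\psi_0(x)+\chi(x)$ for both $S$ and $\{0\}$, and continuity of the finite-dimensional inclusion into $\D_A$. The only cosmetic differences are your spherical shell in place of $\CCC^4\setminus\{0\}$ and your use of Baire's theorem in the finite-dimensional parameter space where the paper instead notes that ``no zeroes in $K$'' is open and dense.
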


Theorem~\ref{thm:Dirac} is a special case of a more general result, Theorem~\ref{thm:E} below.
Given a subspace $\E$ of $C^\infty(\R^m,\R^n)$ (such as $\D_A$) and a point $x\in\R^m$, we denote by $\E(x)$ the image of $\E$ under the evaluation map $f\mapsto f(x)$, i.e.,
\be
\E(x)=\big\{f(x):f\in\E\big\}.
\ee

\begin{teo}\label{thm:E}
Let $\E$ be a subspace of $C^\infty(\R^m,\R^n)$ and $K$ be a compact subset of $\R^m$ such that $\E(x)=\R^n$, for all $x\in K$.
Assume that $m<n$. If $S$ is a submanifold of $\R^n$ whose closure is contained
in $S\cup\{0\}$ then the set of those $f\in\E$ that have no zeroes in $K$ and that are transverse
to $S$ along $K$ is open and dense in $\E$ with respect to the smooth weak topology.
\end{teo}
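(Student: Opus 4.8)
The plan is to deduce Theorem~\ref{thm:E} from the classical transversality theorem by a reduction to finitely many ``finite-dimensional'' approximations and a compactness argument. The key observation is that transversality along a \emph{compact} set is a genuinely open condition, and the difficulty is density: given an arbitrary $f\in\E$ and a neighborhood of it in the smooth weak topology, we must perturb $f$ within $\E$ — not within $C^\infty(\R^m,\R^n)$ — to something transverse to $S$ along $K$ with no zeroes in $K$. Since $\E$ is only a linear subspace, the perturbations available to us are elements $g\in\E$; the hypothesis $\E(x)=\R^n$ for all $x\in K$ is what guarantees we have enough perturbations to work with pointwise, and the condition $m<n$ together with $\overline S\subseteq S\cup\{0\}$ is what makes ``transverse to $S$'' the same as ``avoiding $S$'' in a neighborhood sense. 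I expect the proof to proceed through an auxiliary lemma (this is surely the ``Lemma~\ref{thm:lemmaopen}'' referenced in the excerpt) isolating the openness half and the local-to-global passage.

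First I would establish \textbf{openness}. Because $K$ is compact and $\overline S\subseteq S\cup\{0\}$, if $f$ has no zeroes in $K$ then $f(K)$ is a compact subset of $\R^n\setminus\{0\}$, so $f(K)\cap \overline S = f(K)\cap S$ is compact, and one checks that transversality of $f$ to $S$ along $K$ persists under small $C^1$ perturbations uniformly on $K$; hence the transverse-and-nonvanishing set is open in $\E$. (The point of passing to $\overline S$ is precisely that $S$ itself is not closed, so one cannot directly invoke the stated transversality theorem, which assumes a closed submanifold $Q$; working on the compact piece $f(K)$ sidesteps this.) This step is routine but must be done carefully, and it is where compactness of $K$ is essential.

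Next comes \textbf{density}, the heart of the matter. Fix $f_0\in\E$ and a basic open neighborhood determined by finitely many semi-norms $\Vert\cdot\Vert_{K',\alpha}$; enlarging $K'$ we may assume $K\subseteq K'$ compact. The idea is to cover $K$ by finitely many small open balls $B_1,\dots,B_r$ and, for each, choose finitely many functions $g_{i,1},\dots,g_{i,n}\in\E$ whose values at the center of $B_i$ form a basis of $\R^n$ (possible since $\E(x)=\R^n$); by continuity these stay a basis, and their values stay ``uniformly independent,'' throughout a slightly shrunk ball $B_i'$, and the $B_i'$ still cover $K$. Consider the finite-dimensional family $F(x,\lambda) = f_0(x) + \sum_{i}\sum_{k} \lambda_{i,k}\, \chi_i(x)\, g_{i,k}(x)$, where the $\chi_i$ form a smooth partition of unity subordinate to the cover; this is a smooth map $\R^m\times\R^N\to\R^n$ lying in $\E$ for each fixed $\lambda$. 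One shows that $(x,\lambda)\mapsto F(x,\lambda)$ is transverse to $S$ (and avoids $0$) as a map on $K\times\{|\lambda|<\varepsilon\}$ — this is where the pointwise surjectivity $\E(x)=\R^n$ does its real work, since at any point $x\in K$ some $\chi_i(x)>0$ and the corresponding $g_{i,k}(x)$ span $\R^n$, so $\partial_\lambda F$ alone already surjects onto $\R^n$ and a fortiori the full differential is transverse to anything. Then the parametric transversality theorem (Thom's lemma / Sard's theorem applied to the projection to the parameter $\lambda$) yields a regular value $\lambda^*$ arbitrarily close to $0$ for which $x\mapsto F(x,\lambda^*)$ is transverse to $S$ along $K$; shrinking $\varepsilon$ keeps this perturbation inside the prescribed neighborhood of $f_0$ and, since $m<n$ forces $S$-avoidance to dominate near zeroes, also arranges no zeroes in $K$. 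This gives density, and combined with openness finishes Theorem~\ref{thm:E}; Theorem~\ref{thm:Dirac} then follows by taking $\E=\D_A$, $m=4<8=n$, $\R^n=\C^4$, $S$ as in Section~\ref{sec:spinors} with $\overline S = S\cup\{0\}$ (Proposition~\ref{prop:S}), after checking that $\D_A(x)=\C^4$ for every $x$ — which is exactly the content of the four linearly independent plane-wave solutions \eqref{eq:fourwaves} exhibited in Section~\ref{sec:spinors}.

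The main obstacle I anticipate is the \textbf{density step's pointwise-to-global bookkeeping}: ensuring that the finitely many perturbation directions $g_{i,k}\in\E$ chosen near each point really do furnish, via a partition of unity, a single finite-dimensional family $F(\cdot,\lambda)$ that is \emph{simultaneously} transverse to $S$ over all of $K$ and stays within an \emph{a priori} prescribed smooth-weak neighborhood of $f_0$. Getting the quantifiers in the right order — first fix the neighborhood and $K'$, then choose the cover and the $g_{i,k}$, then bound $N$ and $\varepsilon$ — and verifying that the construction respects the linear-subspace constraint ($F(\cdot,\lambda)\in\E$ for all $\lambda$, which holds because each summand $\lambda_{i,k}\chi_i g_{i,k}$ is a scalar multiple of $g_{i,k}$ — wait, no: $\chi_i g_{i,k}$ need not lie in $\E$). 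This last point is the real subtlety: $\E$ is closed under linear combinations with \emph{constant} coefficients, not under multiplication by the cutoffs $\chi_i$. So the partition-of-unity trick as stated does not keep us inside $\E$, and one must instead use \emph{global} perturbation directions $g_1,\dots,g_N\in\E$ (no cutoffs), chosen so that at \emph{every} point of $K$ some $n$ of their values span $\R^n$; such a finite family exists by compactness of $K$ and the hypothesis $\E(x)=\R^n$, essentially because the ``Wronskian-type'' condition $\{g_j(x)\}\text{ spans }\R^n$ is open in $x$. Then $F(x,\lambda)=f_0(x)+\sum_j \lambda_j g_j(x)$ genuinely lies in $\E$, and the parametric transversality argument goes through cleanly. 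Recognizing that one needs global directions rather than localized bumps — and proving the requisite finite family exists — is, I expect, the one genuinely non-mechanical idea in the proof.
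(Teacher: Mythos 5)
Your proposal is correct and follows essentially the same route as the paper: openness by restricting to a neighborhood of $K$ on which $f$ is nonvanishing (so that $S\setminus\{0\}$ is closed in $\R^n\setminus\{0\}$ and the openness half of the transversality theorem applies), and density by assembling a finite-dimensional subspace of $\E$ whose evaluations span $\R^n$ at every point near $K$ and running parametric transversality on $f_0+\lambda$ --- your self-correction away from partitions of unity toward global perturbation directions is exactly the paper's Corollary~\ref{thm:corfinitesubspace}. The one spot where the paper is tidier than your somewhat vague ``$m<n$ forces avoidance near zeroes'' is the no-zeroes half of density: since $m<n$, having no zeroes in $K$ is literally the same as being transverse to the closed submanifold $\{0\}$ along $K$, so the identical machinery yields an open dense set of zero-free maps, which is then intersected with the dense set of maps transverse to $S$.
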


From the proof of Theorem~\ref{thm:Dirac}, see Section~\ref{sec:proofs}, it follows also that the restriction to a
compact set $K$ can be omitted at the expense of a slightly weaker (yet widely used) notion of
``generic,'' according to which a property is generic if the set of functions with that property contains
(instead of a dense open set) a dense $G_\delta$ set, i.e., a dense set that is the intersection of countably
many open sets. The corresponding version of Theorem~\ref{thm:Dirac} is formulated as Corollary~\ref{cor:Gdelta} in
Section~\ref{sec:proofs}.

\section{Proofs}
\label{sec:proofs}

We first focus on Proposition~\ref{prop:S}.

\begin{lem}\label{thm:lemembedding}
Let $M$, $N$, $P$ be differentiable manifolds and let $f:M\to N$, $g:N\to P$ be smooth maps. If $g\circ f:M\to P$ is an
embedding then also $f$ is an embedding.
\end{lem}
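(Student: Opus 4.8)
The plan is to prove that $f$ is an embedding by checking the three standard requirements: $f$ is an injective immersion, and $f$ is a homeomorphism onto its image (with the subspace topology from $N$). Each of these will follow quickly from the corresponding property of $g\circ f$, together with the fact that $g\circ f = g\circ f$ factors through $f$.

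First I would establish injectivity of $f$: if $f(x_1)=f(x_2)$, then applying $g$ gives $(g\circ f)(x_1)=(g\circ f)(x_2)$, and since $g\circ f$ is injective (being an embedding), $x_1=x_2$. Next, I would check that $f$ is an immersion, i.e., that $\dd f_x$ is injective at every $x\in M$: by the chain rule, $\dd(g\circ f)_x = \dd g_{f(x)} \circ \dd f_x$, and since $\dd(g\circ f)_x$ is injective (an embedding is in particular an immersion), the first factor $\dd f_x$ must be injective as well. This is the purely algebraic observation that if a composition of linear maps is injective then the map applied first is injective.

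The slightly more delicate point is showing that $f$ is a homeomorphism onto its image $f(M)\subseteq N$, equivalently that $f$ is an \emph{open} (or closed) map onto $f(M)$, equivalently that $f^{-1}:f(M)\to M$ is continuous. Here I would argue as follows: let $h = g|_{f(M)}:f(M)\to g(f(M)) = (g\circ f)(M)$. Since $g$ is injective on $N$ (it is an embedding, hence injective), $h$ is a continuous bijection, and in fact $h$ is a homeomorphism because $g$ restricted to any subset is a homeomorphism onto its image when $g$ itself is an embedding (the subspace topology is transitive: the subspace topology that $f(M)$ inherits from $N$ agrees with the one it inherits as a subspace of $g(f(M))$ via $g$, precisely because $g:N\to P$ is a topological embedding). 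Now $g\circ f: M\to (g\circ f)(M)$ is a homeomorphism by hypothesis, and it equals $h\circ (f:M\to f(M))$. Therefore $f:M\to f(M) = h^{-1}\circ(g\circ f)$ is a composition of homeomorphisms, hence a homeomorphism.

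I expect the main obstacle to be purely expository rather than mathematical: making precise and justifying the "transitivity of the subspace topology" step, i.e., that because $g:N\to P$ is an embedding, the restriction $g|_{f(M)}$ is a homeomorphism of $f(M)$ (with its subspace topology from $N$) onto $g(f(M))$ (with its subspace topology from $P$). Once that is granted, everything else is immediate from the chain rule and elementary set theory. One should also note that $f(M)$ with the subspace topology from $N$ is automatically a manifold once we know $f$ is an injective immersion and a homeomorphism onto its image, so no separate smoothness check is needed beyond the immersion property; the smooth structure on $f(M)$ making $f$ a diffeomorphism is then the standard one. I would present the argument in the three-bullet structure above, keeping the topological step as the one meriting a sentence or two of care.
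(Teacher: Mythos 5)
There is a genuine gap in your third step. You write ``Since $g$ is injective on $N$ (it is an embedding, hence injective)'' and then justify that $h=g|_{f(M)}$ is a homeomorphism onto its image \emph{because $g:N\to P$ is a topological embedding}. But the lemma does not assume that $g$ is an embedding --- only that $g$ is smooth and that the \emph{composition} $g\circ f$ is an embedding, and these hypotheses do not force $g$ to be injective (take $M$ a point, $N=P=\R$, $f\equiv 0$, $g(x)=x^2$). So the factorization $f=h^{-1}\circ(g\circ f)$ is not available as you have justified it: you have no license to conclude from properties of $g$ alone that $h$ is invertible with continuous inverse. (The injectivity of $h$ on $f(M)$ can in fact be rescued from the injectivity of $g\circ f$, but the continuity of $h^{-1}$ cannot be obtained the way you propose.) Your injectivity and immersion steps are correct and coincide with the paper's.

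The repair is to compose in the other order, which is what the paper does: the inverse of $f:M\to f(M)$ is the map $(g\circ f)^{-1}\circ g|_{f(M)}$ from $f(M)$ to $M$, since $(g\circ f)^{-1}\bigl(g(f(x))\bigr)=x$ for all $x\in M$. Here $g|_{f(M)}:f(M)\to (g\circ f)(M)$ is continuous merely because $g$ is continuous --- no injectivity or embedding property of $g$ is used --- and $(g\circ f)^{-1}$ is continuous because $g\circ f$ is an embedding. Hence $f^{-1}$ is continuous and $f$ is a homeomorphism onto its image. If you prefer to keep your factorization $f=h^{-1}\circ(g\circ f)$, you would instead have to prove that $h$ is a homeomorphism by exhibiting its inverse as $f\circ(g\circ f)^{-1}$ (continuous for the same reasons); either way, the appeal to ``$g$ is an embedding'' must be deleted, as it is not a hypothesis and is exactly what makes the step fail in general.
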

\begin{proof}
The injectivity of $g\circ f$ implies the injectivity of $f$. The inverse of the bijection $f:M\to f(M)$ is given
by the composition of the map $g\vert_{f(M)}:f(M)\to g\big(f(M)\big)$ with the inverse of the map $g\circ f:M\to g\big(f(M)\big)$
and it is therefore continuous. Finally, given $x\in M$, the injectivity of the linear map
$\dd(g\circ f)(x)=\dd g\big(f(x)\big)\circ\dd f(x)$ implies the injectivity of $\dd f(x)$, so that $f$ is an immersion.
\end{proof}

Recall that if $E$, $F$ are vector bundles over a differentiable manifold $M$ then a {\em vector bundle morphism\/}
from $E$ to $F$ is a smooth map $T:E\to F$ such that, for all $x\in M$, the restriction $T_x$ of $T$ to the fiber
$E_x$ is a linear map taking values in the fiber $F_x$. It is well-known that if $T:E\to F$ is a vector bundle morphism
such that the rank of $T_x$ is independent of $x\in M$ then the kernel $\Ker(T)=\bigcup_{x\in M}\Ker(T_x)$ is
a vector subbundle of $E$ (and, in particular, it is an embedded submanifold of $E$).

\begin{proof}[Proof of Proposition~\ref{prop:S}]
For $\omega\in\R^3$, set $\alpha_\omega=\omega_x\alpha_x+\omega_y\alpha_y+\omega_z\alpha_z$. If $\omega$ is in the
unit sphere $\SSS^2$, the eigenspace corresponding to the eigenvalue $+1$ of $\alpha_\omega$ is the subspace
$S_\omega\cup\{0\}$ of $\C^4$ having complex dimension equal to $2$ (and real dimension equal to $4$).
Consider the trivial vector bundle $\SSS^2\times\C^4$ over the sphere $\SSS^2$ and let $T:\SSS^2\times\C^4\to\SSS^2\times\C^4$
be the vector bundle morphism defined by:
\[T(\omega,\psi)=\big(\omega,(\alpha_\omega-\Id)\psi\big),\quad\omega\in\SSS^2,\ \psi\in\C^4,\]
where $\Id$ denotes the $4\times4$ identity matrix. The kernel $\Ker(T)$ is a vector subbundle $E$ of
$\SSS^2\times\C^4$ whose fibers are $4$-dimensional, so that $E$ is a $6$-dimensional embedded submanifold
of $\SSS^2\times\C^4$ (and of $\R^3\times\C^4$). Let $E'$ denote the complement in $E$ of the zero section of $E$ (so
that $E'$ is an open subset of $E$) and let $f:E'\to\C^4$ denote the restriction to $E'$ of the second projection
of the product $\SSS^2\times\C^4$, so that $f$ is a smooth map and $S$ is the image of $f$. We now use Lemma~\ref{thm:lemembedding}
to establish that $f$ is an embedding. Consider the map $\mathbf v:\C^4\setminus\{0\}\to\R^3$ defined by:
\[
\mathbf v(\psi)=\frac{\psi^\dagger \valpha \psi}{\psi^\dagger \psi},\quad\psi\in\C^4\setminus\{0\},
\]
so that $\mathbf v(\psi)=\omega$, for all $\psi\in S_\omega$. The smooth map $g:\C^4\setminus\{0\}\to\R^3\times\C^4$
defined by $g(\psi)=\big(\mathbf v(\psi),\psi\big)$ has the property that $g\circ f$ is the inclusion map
of $E'$ into $\R^3\times\C^4$, which is an embedding. Hence $f$ is an embedding.
\end{proof}

\begin{proof}[Proof of Proposition~\ref{prop:zero}]
Consider any $x\in\Sigma$. Since $\psi(x)\in S$, $\psi(x)\neq 0$, so that the current vector field $j$ does not vanish
at $x$ and there is a unique nontrivial maximal integral curve of $j$ containing $x$, $C_x$. Since $j$ is everywhere timelike or lightlike, $C_x$ is everywhere timelike or lightlike, and therefore intersects the hyperplane $\{t=t_1\}$ at most once. The set
\be
U=\bigl\{x\in \Sigma: C_x \cap \{t=t_1\}\neq \emptyset\bigr\}
\ee
is open in $\Sigma$, and the mapping $f:U\to \{t=t_1\}$ such that $C_x\cap\{t=t_1\}=\{f(x)\}$ is well defined and, by a standard argument using the implicit function theorem, smooth.
Since the dimension of $\Sigma$ is smaller than the dimension of the hyperplane
$\{t=t_1\}$, $f(U)$ has Lebesgue measure zero. A Bohmian trajectory $t\mapsto(t,\vQ(t))$ intersects $\Sigma$ if and only if it is
$C_x$ for some $x\in U$, and thus if and only if $(t_1,\vQ(t_1))\in f(U)$; since $f(U)$ is a null set, the probability $p$ of that event is
\be
p=\int_{f(U)}\dd^3\vq\, |\psi(t_1,\vq)|^2 = 0\,.
\ee
\end{proof}

We now turn to Theorem~\ref{thm:E} and begin by recalling the following:

\begin{teo}\label{thm:transversality}
Let $M$, $\Lambda$, $N$ be differentiable manifolds, $U$ an open subset of $M\times\Lambda$, $f:U\to N$ a smooth map and $S$ a submanifold of $N$. If $f$ is transverse to $S$ then the set of those $\lambda\in\Lambda$
for which the map $f(\cdot,\lambda)$ is transverse to $S$ contains a countable intersection of open dense subsets of
$\Lambda$ and, in particular, is dense in $\Lambda$.
\end{teo}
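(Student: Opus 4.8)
The plan is to deduce Theorem~\ref{thm:transversality} (the parametric transversality theorem) from Sard's theorem by the classical ``preimage of a transverse map is a manifold'' device. The first step is to form $W=f^{-1}(S)\subseteq U$. Since $f$ is transverse to $S$, the preimage theorem (the same fact invoked in Section~\ref{sec:thm} to conclude that $\psi^{-1}(S)$ is a submanifold) shows that $W$ is an embedded submanifold of $U$ of codimension equal to $\operatorname{codim}S$; if $W=\emptyset$ the conclusion is trivial, so assume $W\neq\emptyset$. Let $\pi:M\times\Lambda\to\Lambda$ denote the projection and $\pi_W=\pi|_W:W\to\Lambda$ its restriction.

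The heart of the argument is the pointwise claim: \emph{every regular value $\lambda$ of $\pi_W$ has the property that $f(\cdot,\lambda)$ is transverse to $S$.} To prove it one fixes $x\in M$ with $y:=f(x,\lambda)\in S$, so that $(x,\lambda)\in W$, and chases tangent vectors. Given $v\in T_yN$, transversality of $f$ at $(x,\lambda)$ yields $(a,b)\in T_xM\oplus T_\lambda\Lambda$ with $\dd f_{(x,\lambda)}(a,b)-v\in T_yS$; regularity of $\lambda$ for $\pi_W$ supplies $(a',b)\in T_{(x,\lambda)}W$ with the same $\Lambda$-component $b$, and since $f(W)\subseteq S$ one has $\dd f_{(x,\lambda)}(a',b)\in T_yS$; subtracting, $\dd f(\cdot,\lambda)_x(a-a')=\dd f_{(x,\lambda)}(a-a',0)=\dd f_{(x,\lambda)}(a,b)-\dd f_{(x,\lambda)}(a',b)$ differs from $v$ by an element of $T_yS$, which is exactly what transversality of $f(\cdot,\lambda)$ at $x$ demands. (The degenerate cases $\dim M<\operatorname{codim}S$ and empty preimages are swept up automatically by the convention that a value with empty preimage counts as regular.)

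The final step is to show that the set $R$ of regular values of $\pi_W$ contains a countable intersection of open dense subsets of $\Lambda$; combined with the pointwise claim this finishes the proof. Since $W$ is second countable, write $W=\bigcup_n C_n$ with each $C_n$ compact, and let $Z\subseteq W$ be the (closed) set of critical points of $\pi_W$; then the set of critical values is $\pi_W(Z)=\bigcup_n\pi_W(Z\cap C_n)$, a countable union of compact sets each of which is Lebesgue-null by Sard's theorem. A compact null subset of a manifold is nowhere dense, so each $\Lambda\setminus\pi_W(Z\cap C_n)$ is open and dense, whence $R=\Lambda\setminus\pi_W(Z)=\bigcap_n\bigl(\Lambda\setminus\pi_W(Z\cap C_n)\bigr)$ is a countable intersection of open dense sets, and is dense by the Baire category theorem.

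I expect the genuine subtlety to lie not in the tangent-space bookkeeping of the second step, which is routine once $\pi_W$ has been introduced, but in the third step: Sard's theorem only asserts that the critical values are Lebesgue-null, and upgrading this to the stated conclusion about a countable intersection of open dense sets requires exhausting $W$ by compacta and using that $\pi_W(Z)$, though possibly not closed, is a countable union of closed null sets. Everything else is standard differential topology that can be cited.
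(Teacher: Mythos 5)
Your proof is correct, and since the paper does not actually prove this statement but only cites Hirsch (Chap.~3.2, Thm.~2.7), your argument is essentially the standard one found in that reference: form the submanifold $W=f^{-1}(S)$, show that every regular value $\lambda$ of the restricted projection $\pi|_W:W\to\Lambda$ makes $f(\cdot,\lambda)$ transverse to $S$, and apply Sard's theorem. You also correctly identified and handled the only delicate point, namely that the critical-value set $\pi|_W(Z)$ need not be closed, so that one must exhaust $W$ by compacta to express its complement as a countable intersection of open dense sets rather than merely a set of full measure.
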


\begin{proof}
See \cite[Chap.~3.2, Thm~2.7]{Hir76}.
\end{proof}

If $V$ is a subset of $\R^n$, we denote by $C^\infty(U,V)$ the subset of $C^\infty(U,\R^n)$ consisting of maps with range contained in $V$. By the {\em smooth weak topology\/} of $C^\infty(U,V)$ (or, more generally, of any subset of $C^\infty(U,\R^n)$) we will mean the topology induced from the smooth weak topology of $C^\infty(U,\R^n)$.

\begin{prop}\label{thm:propopen}
Let $U$ be an open subset of $\R^m$, $V$ an open subset of $\R^n$, $S$ a submanifold of $\R^n$ contained
in $V$, and let $K$ be a compact subset of $U$. If $S$ is closed relative to $V$ then the set of those $f\in C^\infty(U,V)$
which are transverse to $S$ along $K$ is open in $C^\infty(U,V)$ with respect to the smooth weak topology.
\end{prop}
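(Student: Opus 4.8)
The plan is to show that every $f_0\in C^\infty(U,V)$ which is transverse to $S$ along $K$ is an interior point, with respect to the smooth weak topology, of the set of maps in $C^\infty(U,V)$ that are transverse to $S$ along $K$. The guiding observation is that this property only constrains the $1$-jet of $f$ at points of the compact set $K$, and that, point by point, it is a stable condition: a point $y\in K$ fails to obstruct transversality either because $f(y)\notin S$, which is stable once $S$ is closed near $f(y)$, or because $f(y)\in S$ and $(\dd f)_y$ surjects onto a complement of $T_{f(y)}S$, which is an open rank condition. The work is to make these two forms of stability uniform over $K$ via compactness.

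First I would fix the local description of $S$: around each $p\in S$ there are an open set $W\subseteq\R^n$ and a submersion $\rho\colon W\to\R^{c}$, with $c$ the (locally constant) codimension of $S$ near $p$, such that $S\cap W=\rho^{-1}(0)$; and for any smooth $f$ and any $y$ with $f(y)\in W\cap S$, the map $f$ is transverse to $S$ at $y$ if and only if $\dd(\rho\circ f)_y\colon\R^m\to\R^c$ is surjective --- the standard reformulation of the definition of transversality in Section~\ref{sec:thm}, using $T_{f(y)}S=\Ker\dd\rho_{f(y)}$. Using compactness of $K$, I would cover it finitely. For $x\in K$ with $f_0(x)\notin S$: since $S$ is closed in $V$ and $f_0(x)\in V$, pick an open $V_x\subseteq V$ with $f_0(x)\in V_x$ and $V_x\cap S=\emptyset$, and a compact neighbourhood $K_x$ of $x$ in $U$ with $f_0(K_x)\subseteq V_x$. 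For $x\in K$ with $f_0(x)\in S$: transversality of $f_0$ at $x$ forces $c\le m$ (the image of $(\dd f_0)_x$ must have dimension at least $c$), so full row rank is an open condition on $c\times m$ matrices; pick $W_x,\rho_x$ as above and a compact neighbourhood $K_x$ of $x$ with $f_0(K_x)\subseteq W_x$ and $\dd(\rho_x\circ f_0)_y$ of full rank for every $y\in K_x$, which is possible because rank is lower semicontinuous and it is full at $x$. Finally extract a finite subcover $K_{x_1},\dots,K_{x_N}$ of $K$ and set $K'=\bigcup_{i=1}^N K_{x_i}$.

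Next I would exhibit the neighbourhood of $f_0$: let $\mathcal N$ consist of all $g\in C^\infty(U,V)$ with $\Vert g-f_0\Vert_{K',0}<\varepsilon_0$ and, for every index $i$ for which $f_0(x_i)\in S$, with $\Vert g-f_0\Vert_{K_{x_i},\alpha}<\varepsilon_1$ for all $\alpha$ with $|\alpha|=1$, where $\varepsilon_0,\varepsilon_1>0$ are to be chosen. Taking $\varepsilon_0$ below the positive distance from each compact set $f_0(K_{x_i})$ to the complement of the corresponding open set ($V_{x_i}$ or $W_{x_i}$) guarantees $g(K_{x_i})\subseteq V_{x_i}$, resp.\ $g(K_{x_i})\subseteq W_{x_i}$, for every $g\in\mathcal N$. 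Granting that $\varepsilon_0,\varepsilon_1$ are moreover small enough that $\dd(\rho_{x_i}\circ g)_y$ has full rank for all such $i$, all $y\in K_{x_i}$ and all $g\in\mathcal N$, the conclusion follows immediately: if $g\in\mathcal N$ and $y\in K$ satisfies $g(y)\in S$, pick $i$ with $y\in K_{x_i}$; then $f_0(x_i)\in S$ (otherwise $g(K_{x_i})\subseteq V_{x_i}$ would be disjoint from $S$), so $g(y)\in W_{x_i}$ with $\rho_{x_i}(g(y))=0$ and $\dd(\rho_{x_i}\circ g)_y$ surjective, and by the local criterion $g$ is transverse to $S$ at $y$.

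The only step needing a genuine estimate, and the one I expect to be the main obstacle (the rest being bookkeeping with the defining semi-norms of the smooth weak topology), is the choice of $\varepsilon_1$ realizing the uniform full-rank statement. Here I would write, for $y\in K_{x_i}$,
\begin{align*}
\dd(\rho_{x_i}\circ g)_y-\dd(\rho_{x_i}\circ f_0)_y={}&\dd(\rho_{x_i})_{g(y)}\circ\bigl((\dd g)_y-(\dd f_0)_y\bigr)\\
&{}+\bigl(\dd(\rho_{x_i})_{g(y)}-\dd(\rho_{x_i})_{f_0(y)}\bigr)\circ(\dd f_0)_y,
\end{align*}
and bound the right-hand side uniformly in $y\in K_{x_i}$ using that $\dd\rho_{x_i}$ is bounded and uniformly continuous on a fixed compact neighbourhood of $f_0(K_{x_i})$ inside $W_{x_i}$ (inside which $g(K_{x_i})$ also lies, after further shrinking $\varepsilon_0$), that $\dd f_0$ is bounded on $K_{x_i}$, and that $\Vert g-f_0\Vert_{K_{x_i},0}$ together with the $\Vert g-f_0\Vert_{K_{x_i},\alpha}$, $|\alpha|=1$, are small. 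Then $\dd(\rho_{x_i}\circ g)_y$ stays within the positive distance separating the compact set $\{\dd(\rho_{x_i}\circ f_0)_y:y\in K_{x_i}\}$ from the closed set of rank-deficient $c\times m$ matrices, and is therefore of full rank. The remaining ingredients --- existence of the local submersions $\rho$, the local transversality criterion, and the elementary compact-in-open estimate for $\varepsilon_0$ --- are standard.
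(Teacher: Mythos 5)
Your argument is correct. The paper's own ``proof'' of Proposition~\ref{thm:propopen} is a one-line citation of the openness part of the transversality theorem in Hirsch \cite[Chap.~3.2, Thm~2.1]{Hir76}, so there is no written argument to compare against; what you have done is supply, essentially verbatim, the standard proof of that cited fact. The structure is right: reduce transversality at $y$ to surjectivity of $\dd(\rho\circ f)_y$ via local defining submersions; split $K$ into compact pieces on which either $f_0$ stays away from $S$ (this is exactly where closedness of $S$ relative to $V$ is used, to get an open $V_x\subseteq V$ around $f_0(x)$ missing $S$) or $\dd(\rho_{x_i}\circ f_0)$ has full rank; control both conditions under small $C^0$- and $C^1$-perturbations measured by finitely many of the defining semi-norms, which indeed yields a smooth-weak-open neighbourhood; and observe that any $y\in K$ with $g(y)\in S$ can only lie in pieces of the second kind, where the full-rank estimate applies. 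The perturbation estimate for $\dd(\rho_{x_i}\circ g)_y$ and the positive-distance argument to the rank-deficient matrices are sound. The only payoff of your version over the paper's is self-containedness: it makes visible exactly where the hypotheses (compactness of $K$, closedness of $S$ in $V$, and the fact that transversality is a $1$-jet condition) enter.
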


\begin{proof}
This is immediate from the transversality theorem, see \cite[Chap.~3.2, Thm~2.1]{Hir76}.
\end{proof}

\begin{lem}
If $\E$ is a subspace of $C^\infty(\R^m,\R^n)$ and $x\in\R^m$ is such that $\E(x)=\R^n$ then
there exists a finite dimensional subspace $\mathcal F$ of $\E$ and a neighborhood $U$ of $x$ such that
$\mathcal F(y)=\R^n$, for all $y\in U$.
\end{lem}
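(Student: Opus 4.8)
The plan is to use the hypothesis $\E(x)=\R^n$ to produce, by a finite-selection argument, finitely many functions in $\E$ whose values at $x$ form a basis of $\R^n$, and then to use continuity to extend the basis property to a neighborhood of $x$. First I would pick $f_1,\dots,f_n\in\E$ with $f_1(x),\dots,f_n(x)$ linearly independent in $\R^n$; such a choice exists precisely because the evaluation map $f\mapsto f(x)$ carries $\E$ onto all of $\R^n$, so its image contains a basis. Set $\mathcal F=\operatorname{span}\{f_1,\dots,f_n\}\subseteq\E$, a finite dimensional subspace of $\E$, and note that $\mathcal F(x)=\R^n$ by construction.

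Next I would promote the condition $\mathcal F(y)=\R^n$ from the single point $x$ to a neighborhood. Consider the smooth map $y\mapsto D(y):=\det\bigl(f_1(y),\dots,f_n(y)\bigr)$, the determinant of the $n\times n$ matrix whose columns are the values of $f_1,\dots,f_n$ at $y$. Each $f_i$ is smooth, hence $D$ is continuous on $\R^m$, and $D(x)\neq 0$ because $f_1(x),\dots,f_n(x)$ is a basis. Therefore the set $U=\{y\in\R^m : D(y)\neq 0\}$ is open and contains $x$. For every $y\in U$ the vectors $f_1(y),\dots,f_n(y)$ are linearly independent, hence span $\R^n$; since they lie in $\mathcal F(y)$ this gives $\mathcal F(y)=\R^n$ for all $y\in U$, as required.

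There is essentially no hard step here: the statement is a soft linear-algebra-plus-continuity observation, and the only thing to be a little careful about is that the span in $\E$ of finitely many functions really is a subspace of $\E$ (immediate, since $\E$ is a subspace) and that finitely many functions suffice (immediate, since $\dim\R^n=n<\infty$, so a spanning set contains an $n$-element basis). If anything could be called the main point, it is the choice of the right auxiliary function to detect the open condition — the determinant $D(y)$ does this cleanly, converting ``the values span $\R^n$'' into ``$D(y)\neq 0$,'' which is manifestly an open condition.
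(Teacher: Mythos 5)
Your proof is correct and follows essentially the same route as the paper's: choose $f_1,\dots,f_n\in\E$ whose values at $x$ form a basis of $\R^n$, let $\mathcal F$ be their span, and observe that the basis property persists on a neighborhood of $x$ by continuity (your determinant function $D$ just makes explicit the openness argument that the paper leaves implicit).
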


\begin{proof}
Let $f_i\in\E$ be such that $f_i(x)$ equals the $i$-th vector of the canonical basis of $\R^n$. Then
$\big(f_1(y),\ldots,f_n(y)\big)$ is a basis of $\R^n$ for $y$ in some neighborhood $U$ of $x$. Let $\mathcal F$
be spanned by $f_1$, \dots, $f_n$.
\end{proof}

\begin{cor}\label{thm:corfinitesubspace}
If $\E$ is a subspace of $C^\infty(\R^m,\R^n)$ and $K$ is a compact subset of $\R^m$ such that $\E(x)=\R^n$ for all $x\in K$ then there exists a finite
dimensional subspace $\mathcal F$ of $\E$ and an open subset $M$ of $\R^m$ containing $K$ such that $\mathcal F(x)=\R^n$, for all $x\in M$.
\end{cor}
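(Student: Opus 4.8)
The plan is to reduce Corollary~\ref{thm:corfinitesubspace} to the preceding Lemma by a compactness argument. First I would apply the Lemma at each point $x\in K$: since $\E(x)=\R^n$, it provides a finite dimensional subspace $\mathcal F_x\subseteq\E$ together with an open neighborhood $U_x\ni x$ on which $\mathcal F_x(y)=\R^n$ for every $y\in U_x$. The sets $U_x$, $x\in K$, then form an open cover of $K$, and since $K$ is compact I would pass to a finite subcover $U_{x_1},\dots,U_{x_k}$.

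With this subcover in hand, the natural choice is $M=U_{x_1}\cup\cdots\cup U_{x_k}$ and $\mathcal F=\mathcal F_{x_1}+\cdots+\mathcal F_{x_k}$. Then $M$ is open and contains $K$, and $\mathcal F$ is a subspace of $\E$; it is finite dimensional because a finite sum of finite dimensional subspaces is finite dimensional. To check the defining property, take any $y\in M$; it lies in some $U_{x_i}$, so $\R^n=\mathcal F_{x_i}(y)\subseteq\mathcal F(y)$, which forces $\mathcal F(y)=\R^n$. This completes the proof.

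I do not expect any genuine obstacle here: the statement is a routine upgrade from a pointwise condition to one holding uniformly on an open neighborhood of a compact set. The one step that really matters is the passage to a finite subcover, since that is precisely what keeps $\mathcal F$ finite dimensional; compactness of $K$ is therefore used essentially, as an infinite sum of the $\mathcal F_x$ need not be finite dimensional.
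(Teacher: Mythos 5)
Your proof is correct and follows exactly the same route as the paper's: apply the preceding lemma at each point of $K$, extract a finite subcover by compactness, and take $\mathcal F$ to be the (finite) sum of the corresponding subspaces and $M$ the union of the corresponding neighborhoods. Nothing to add.
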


\begin{proof}
For each $x\in K$, let $\mathcal F_x$ be a finite dimensional subspace of $\E$
such that $\mathcal F_x(y)=\R^n$, for all $y$ in an open neighborhood $U_x$ of $x$. By compactness, $K$ is covered by finitely many of the
open sets $U_x$, say $U_{x_1}$, \dots, $U_{x_r}$. Set
$\mathcal F=\sum_{i=1}^r\mathcal F_{x_i}$ and $M=\bigcup_{i=1}^rU_{x_i}$.
\end{proof}

\begin{prop}\label{thm:propdense}
Let $\E$ be a subspace of $C^\infty(\R^m,\R^n)$ and $K$ be a compact subset of $\R^m$ such that $\E(x)=\R^n$, for all $x\in K$.
Assume that $\E$ is endowed with a topology that makes it into a topological vector space (for instance,
the smooth weak topology, being defined by a family of semi-norms, has such property).
If $S$ is a submanifold of $\R^n$ then the set of those $f\in\E$ that
are transverse to $S$ along $K$ is dense in $\E$.
\end{prop}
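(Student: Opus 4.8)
The plan is to reduce the statement, via Corollary~\ref{thm:corfinitesubspace}, to an application of the parametric transversality theorem (Theorem~\ref{thm:transversality}). First I would use Corollary~\ref{thm:corfinitesubspace} to obtain a finite-dimensional subspace $\mathcal F\subseteq\E$ and an open set $M\subseteq\R^m$ with $K\subseteq M$ and $\mathcal F(x)=\R^n$ for all $x\in M$; fix a basis $g_1,\dots,g_N$ of $\mathcal F$. Given an arbitrary $f_0\in\E$, the idea is to perturb $f_0$ within the finite-dimensional affine family $f_0+\mathcal F$ and show that a transverse perturbation exists arbitrarily close to $f_0$; since the inclusion $\mathcal F\hookrightarrow\E$ is continuous (both carrying topologies induced by, or coarser than, the semi-norms, and $\E$ being a topological vector space so that addition and scalar multiplication are continuous), this will give density.

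Concretely, define $F:M\times\R^N\to\R^n$ by $F(x,\lambda)=f_0(x)+\sum_{i=1}^N\lambda_i g_i(x)$, a smooth map. The key claim is that $F$ is transverse to $S$ (as a map on the whole product, in the sense of the Definition). Indeed, at any point $(x,\lambda)$ with $F(x,\lambda)\in S$, the partial derivative of $F$ in the $\lambda$ directions alone already has image equal to $\{\sum_i\lambda_i' g_i(x):\lambda'\in\R^N\}=\mathcal F(x)=\R^n$, because $x\in M$; hence $\dd F_{(x,\lambda)}$ is already surjective onto $T_{F(x,\lambda)}\R^n=\R^n$, which certainly contains $T_{F(x,\lambda)}S$, so \eqref{eq:deftrans} holds. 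Therefore Theorem~\ref{thm:transversality}, applied with the manifold roles $M\rightsquigarrow M$, $\Lambda\rightsquigarrow\R^N$, $N\rightsquigarrow\R^n$, tells us that the set of $\lambda\in\R^N$ for which $F(\cdot,\lambda)$ is transverse to $S$ is dense in $\R^N$; in particular there are such $\lambda$ arbitrarily close to $0$. For those $\lambda$, the function $f_0+\sum_i\lambda_i g_i$ is an element of $\E$, it lies arbitrarily close to $f_0$, and it is transverse to $S$ on all of $M$, hence \emph{a fortiori} transverse to $S$ along $K$. This proves density.

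The one point that needs care — and the only genuine obstacle — is the topological bookkeeping: Theorem~\ref{thm:transversality} delivers transversality of $F(\cdot,\lambda)$ for $\lambda$ in a dense subset of $\R^N$, and I must convert ``$\lambda$ near $0$'' into ``$f_0+\sum\lambda_i g_i$ near $f_0$ in $\E$.'' This is exactly where the hypothesis that $\E$ is a topological vector space enters: the map $\R^N\to\E$, $\lambda\mapsto f_0+\sum_i\lambda_i g_i$, is continuous (it is an affine combination of fixed vectors, and both vector-space operations are continuous by assumption), so the preimage of any neighborhood of $f_0$ is a neighborhood of $0$ in $\R^N$ and hence meets the dense set of good $\lambda$'s. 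No compactness or metrizability of $\E$ is needed, only that it is a topological vector space, which matches the stated hypothesis. Everything else is routine once the product map $F$ is written down and its transversality observed.
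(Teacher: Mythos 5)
Your proposal is correct and follows essentially the same route as the paper's own proof: reduce to a finite-dimensional perturbation space via Corollary~\ref{thm:corfinitesubspace}, observe that $F(x,\lambda)=f_0(x)+\lambda(x)$ is a submersion (hence transverse to $S$) because the $\lambda$-derivative is evaluation at $x$ with image $\mathcal F(x)=\R^n$, apply Theorem~\ref{thm:transversality}, and transfer convergence of $\lambda\to 0$ to convergence $f_0+\lambda\to f_0$ in $\E$ using continuity of the vector-space operations. The topological point you flag at the end is exactly the one the paper also singles out, and your handling of it is correct.
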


\begin{proof}
Let $f_0\in\E$ be fixed and let us show that $f_0$ is a limit of maps in $\E$ that are transverse
to $S$ along $K$. Let $M$ and $\mathcal F$ be as in Corollary~\ref{thm:corfinitesubspace}.
Apply Theorem~\ref{thm:transversality} with $\Lambda=\mathcal F$, $N=\R^n$ and
$f:M\times\Lambda\to N$ defined by:
\be
f(x,\lambda)=f_0(x)+\lambda(x),\quad x\in M,\ \lambda\in\Lambda\,.
\ee
The fact that $f_0$ is smooth and that each $\lambda\in\Lambda$ is smooth implies that $f$ is smooth \cite{foot3}.
The derivative of $f$ with respect to the variable $\lambda$ is given simply by evaluation at $x$, so that
the image of $\dd f(x,\lambda)$ contains $\Lambda(x)=\mathcal F(x)=\R^n$. Thus $f$ is a submersion and in particular
transverse to $S$.
It follows that there exists a sequence $(\lambda_i)_{i\ge1}$ in $\Lambda$ converging to zero such that the map:
\be
f(\cdot,\lambda_i)=(f_0+\lambda_i)\vert_M
\ee
is transverse to $S$, for all $i\ge1$; in particular, $f_0+\lambda_i$ is transverse to $S$ along $K$, for all
$i\ge1$. Notice that the convergence of $(\lambda_i)_{i\ge1}$ to zero which we have so far obtained is relative
to the canonical topology of $\Lambda$ as a finite dimensional real vector space (i.e., the topology induced by any linear
isomorphism of $\Lambda$ with $\R^p$). However, since $\E$
is a topological vector space, the inclusion map of $\Lambda$ into $\E$ is continuous, so that $(\lambda_i)_{i\ge1}$ also converges
to zero with respect to the topology of $\E$. Thus $(f_0+\lambda_i)_{i\ge1}$ is a sequence in $\E$
converging to $f_0$ with respect to the topology of $\E$ and we are done.
\end{proof}

\begin{cor}\label{thm:coropendense}
Let $\E$ be a subspace of $C^\infty(\R^m,\R^n)$ and $K$ be a compact subset of $\R^m$ such that $\E(x)=\R^n$, for all $x\in K$.
If $S$ is a submanifold of $\R^n$ which is also a closed subset of $\R^n$
then the set of those $f\in C^\infty(\R^m,\R^n)$ that are transverse to $S$ along $K$ is open and dense with
respect to the smooth weak topology.
\end{cor}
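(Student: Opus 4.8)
The plan is to read this off as the conjunction of two earlier results applied to one and the same set, namely
\[
T=\bigl\{f\in C^\infty(\R^m,\R^n):f\ \text{is transverse to}\ S\ \text{along}\ K\bigr\}.
\]
Proposition~\ref{thm:propopen} will supply openness of $T$ and Proposition~\ref{thm:propdense} will supply density of $T$; since both propositions speak about this very set, there is nothing to combine beyond observing that $T$ is simultaneously open and dense, which is exactly the assertion.

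For openness I would apply Proposition~\ref{thm:propopen} with $U=\R^m$ and $V=\R^n$. Its hypotheses are met: $S$ is a submanifold of $\R^n$ contained in $V=\R^n$, and by assumption $S$ is a closed subset of $\R^n$, hence closed relative to $V$; also $K$ is compact in $U$. Since $C^\infty(\R^m,V)=C^\infty(\R^m,\R^n)$ when $V=\R^n$, the proposition says precisely that $T$ is open in $C^\infty(\R^m,\R^n)$ for the smooth weak topology. This is the only place the hypothesis ``$S$ closed in $\R^n$'' is used, and the only point requiring care is to match it to the hypothesis ``$S$ closed relative to $V$'' of Proposition~\ref{thm:propopen} through the choice $V=\R^n$.

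For density I would apply Proposition~\ref{thm:propdense} with $\E=C^\infty(\R^m,\R^n)$ itself, endowed with the smooth weak topology. That topology is defined by the family of semi-norms $\Vert\cdot\Vert_{K,\alpha}$, so $\E$ is a topological vector space, and $\E(x)=\R^n$ for every $x\in\R^m$, in particular for every $x\in K$; thus all hypotheses of Proposition~\ref{thm:propdense} hold (here the compactness of $K$ and the condition $\E(x)=\R^n$ on $K$ are automatic). The proposition then yields that $T$ is dense in $C^\infty(\R^m,\R^n)$. Together with the previous paragraph, $T$ is open and dense, as claimed.

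I do not expect a genuine obstacle here: this corollary is essentially bookkeeping, isolating the case in which $S$ is already closed so that openness and density hold directly in the ambient space $C^\infty(\R^m,\R^n)$, without invoking the closure condition $\overline S\subseteq S\cup\{0\}$, the dimension restriction $m<n$, or the ``no zeroes in $K$'' refinement that enter Theorem~\ref{thm:E}. Should one instead want the statement with $f$ ranging over a proper subspace $\E$ satisfying $\E(x)=\R^n$ on $K$, the same argument applies verbatim: openness of $T$ passes to the subspace topology on $\E$, and Proposition~\ref{thm:propdense} already delivers density within $\E$.
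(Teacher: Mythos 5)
Your proposal is correct and matches the paper's proof, which likewise just combines Proposition~\ref{thm:propopen} (with $U=\R^m$, $V=\R^n$) for openness and Proposition~\ref{thm:propdense} for density. Your closing remark also correctly handles the way the corollary is actually invoked later (with the conclusion read inside a subspace $\E$ satisfying $\E(x)=\R^n$ on $K$), so nothing is missing.
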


\begin{proof}
Follows directly from Propositions~\ref{thm:propopen} and \ref{thm:propdense}.
\end{proof}

We need a bit more work because in our case, $S$ is not a closed set (its closure is $S\cup\{0\}$, which is not a submanifold because of the cusp at 0).

\begin{lem}\label{thm:lemmaopen}
Let $K$ be a compact subset of $\R^m$ and $S$ be a submanifold of $\R^n$ whose closure is contained in $S\cup\{0\}$.
The set of those $f\in C^\infty(\R^m,\R^n)$ that have no zeroes in $K$ and that
are transverse to $S$ along $K$ is open in $C^\infty(\R^m,\R^n)$ with respect to the smooth weak topology.
\end{lem}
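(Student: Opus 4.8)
The plan is to reduce the statement to Proposition~\ref{thm:propopen} by passing to the open set $V=\R^n\setminus\{0\}$, in which $S$ \emph{is} closed, and by exploiting the ``no zeroes in $K$'' clause to confine attention to a compact neighbourhood of $K$ on which the relevant maps can legitimately be viewed as maps into $V$.

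First I would fix an $f$ in the set under consideration, so that $f$ has no zeroes in $K$ and is transverse to $S$ along $K$, and build a good neighbourhood of $K$. Since $K$ is compact and $f$ is continuous and nonvanishing on $K$, the set $\{x\in\R^m:f(x)\neq 0\}$ is an open neighbourhood of $K$; by local compactness of $\R^m$ I can pick a bounded open set $W$ with $K\subseteq W$ and $\overline W$ compact and contained in $\{x:f(x)\neq 0\}$, and set $\delta=\min_{x\in\overline W}\|f(x)\|>0$. If $0\in S$ then $\overline S\subseteq S\cup\{0\}=S$, so $S$ is closed in $\R^n$ and the claim is immediate from Proposition~\ref{thm:propopen} (with $V=\R^n$) together with the obvious openness of the ``no zeroes in $K$'' condition; so I may assume $0\notin S$. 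Then the closure of $S$ relative to $V=\R^n\setminus\{0\}$ equals $\overline S\cap V\subseteq(S\cup\{0\})\cap V=S$, i.e.\ $S$ is closed in $V$. The restriction $f|_W$ lies in $C^\infty(W,V)$ and is still transverse to $S$ along $K$, transversality along $K$ being a local condition at the points of $f^{-1}(S)\cap K\subseteq W$. Proposition~\ref{thm:propopen}, applied with $U=W$, this $V$, the submanifold $S$, and the compact set $K\subseteq W$, then yields that the set $\mathcal O$ of maps in $C^\infty(W,V)$ transverse to $S$ along $K$ is open in $C^\infty(W,V)$; write $\mathcal O=\widetilde{\mathcal O}\cap C^\infty(W,V)$ with $\widetilde{\mathcal O}$ open in $C^\infty(W,\R^n)$.

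Next I would pull this back along the restriction map $\rho\colon C^\infty(\R^m,\R^n)\to C^\infty(W,\R^n)$, $g\mapsto g|_W$, which is continuous for the smooth weak topologies because every compact subset of $W$ is a compact subset of $\R^m$ and $\rho$ sends each defining seminorm $\|\cdot\|_{K'',\alpha}$ (with $K''\subseteq W$) to the defining seminorm with the same $K''$ and $\alpha$. Consider
\[
\mathcal U=\rho^{-1}\!\bigl(\widetilde{\mathcal O}\bigr)\cap\bigl\{g\in C^\infty(\R^m,\R^n):\|g-f\|_{\overline W,0}<\tfrac{\delta}{2}\bigr\},
\]
which is open in $C^\infty(\R^m,\R^n)$ (the second factor is the open ball of radius $\tfrac{\delta}{2}$ about $f$ in the seminorm $\|\cdot\|_{\overline W,0}$, hence open) and contains $f$. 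For any $g\in\mathcal U$ one has $\|g(x)\|\ge\|f(x)\|-\tfrac{\delta}{2}\ge\tfrac{\delta}{2}>0$ for all $x\in\overline W$, so $g$ has no zeroes in $K\subseteq\overline W$ and $g|_W$ takes values in $V$; hence $g|_W\in\widetilde{\mathcal O}\cap C^\infty(W,V)=\mathcal O$, meaning $g|_W$---and therefore $g$---is transverse to $S$ along $K$. Thus every $g\in\mathcal U$ lies in the set under consideration, so that set contains the neighbourhood $\mathcal U$ of $f$ and is therefore open.

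The only genuine obstacle is the non-closedness of $S$: its closure acquires the extra point $0$, so the desired openness cannot simply be quoted from the transversality theorem, and indeed a map with range in $S$ can be perturbed to acquire a zero. This is precisely why the ``no zeroes in $K$'' clause is part of the statement---it is what lets one restrict to a compact neighbourhood $\overline W$ of $K$ on which $f$, and hence every sufficiently close $g$, stays away from $0$, so that on $W$ all these maps take values in $V=\R^n\setminus\{0\}$, where $S$ is closed and Proposition~\ref{thm:propopen} is available.
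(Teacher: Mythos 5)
Your proof is correct and follows essentially the same route as the paper's: restrict to a bounded open neighbourhood $W$ of $K$ on whose compact closure $f$ is bounded away from zero, apply Proposition~\ref{thm:propopen} in $V=\R^n\setminus\{0\}$ where $S$ is closed, and pull the resulting open set back along the continuous restriction map. The only cosmetic differences are your explicit $\delta/2$-ball in place of the paper's open set of nowhere-vanishing maps on $\overline{W}$, and your case split on $0\in S$ versus the paper's uniform use of $S\setminus\{0\}$.
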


\begin{proof}
Let $f_0\in C^\infty(\R^m,\R^n)$ be a map that has no zeroes in $K$ and that is transverse to $S$ along $K$.
Since $f_0^{-1}\big(\R^n\setminus\{0\}\big)$ is an open subset of $\R^m$ containing the compact set $K$, it contains
the closure of some bounded open set $U$ that contains $K$. Let $\mathfrak V$ denote the subset of $C^\infty(\R^m,\R^n)$
consisting of the maps that have no zeroes in the closure of $U$. The set $\mathfrak V$ is an open neighborhood
of $f_0$ in $C^\infty(\R^m,\R^n)$ with respect to the smooth weak topology \cite{foot4}
and the map:
\begin{equation}\label{eq:restrmap}
\mathfrak V\ni f\longmapsto f\vert_U\in C^\infty\big(U,\R^n\setminus\{0\}\big)
\end{equation}
is continuous, if both $\mathfrak V$ and $C^\infty\big(U,\R^n\setminus\{0\}\big)$ are endowed with the smooth
weak topology. Now, since $S\setminus\{0\}$ is a submanifold of $\R^n$ which is closed in $\R^n\setminus\{0\}$,
Proposition~\ref{thm:propopen} implies that the set $\mathfrak T$ consisting of the maps in $C^\infty\big(U,\R^n\setminus\{0\}\big)$
that are transverse to $S\setminus\{0\}$ along $K$ is open in $C^\infty\big(U,\R^n\setminus\{0\}\big)$.
Hence, the inverse image of $\mathfrak T$ under the map \eqref{eq:restrmap} is an open neighborhood
of $f_0$ in $C^\infty(\R^m,\R^n)$ which consists only of maps that have no zeroes
in $K$ and that are transverse to $S$ along $K$. This concludes the proof.
\end{proof}

\begin{lem}\label{thm:lemmadense}
Let $\E$ be a subspace of $C^\infty(\R^m,\R^n)$ and $K$ be a compact subset of $\R^m$ such that $\E(x)=\R^n$, for all $x\in K$.
Assume that $m<n$. If $S$ is a submanifold of $\R^n$ then the set of those $f\in\E$ that
have no zeroes in $K$ and that are transverse to $S$ along $K$ is dense in $\E$ with respect to the smooth weak
topology.
\end{lem}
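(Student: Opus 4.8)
The plan is to reduce this to the parametric transversality theorem (Theorem~\ref{thm:transversality}), following the proof of Proposition~\ref{thm:propdense} almost verbatim, but now arranging \emph{two} transversality conditions simultaneously: transversality to $S$ and transversality to the one-point submanifold $\{0\}\subseteq\R^n$. The key observation making this work is that, because $m<n$, the differential $\dd f_x$ is never surjective, so ``$f$ is transverse to $\{0\}$ at $x$'' can only hold when $x\notin f^{-1}(\{0\})$; hence ``$f$ has no zeroes in $K$'' is exactly the same as ``$f$ is transverse to $\{0\}$ along $K$.''

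First I would fix an arbitrary $f_0\in\E$ and aim to produce a sequence in $\E$ converging to $f_0$ in the smooth weak topology whose terms have no zeroes in $K$ and are transverse to $S$ along $K$. As in Proposition~\ref{thm:propdense}, I use Corollary~\ref{thm:corfinitesubspace} to choose a finite-dimensional subspace $\mathcal F\subseteq\E$ and an open set $M\supseteq K$ with $\mathcal F(x)=\R^n$ for all $x\in M$, and I form the smooth map $F:M\times\mathcal F\to\R^n$, $F(x,\lambda)=f_0(x)+\lambda(x)$. The derivative of $F$ in the $\lambda$-direction at any point is evaluation at $x$, which is onto $\mathcal F(x)=\R^n$; thus $F$ is a submersion, hence transverse to \emph{every} submanifold of $\R^n$, and in particular to both $S$ and $\{0\}$.

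Next I apply Theorem~\ref{thm:transversality} twice, once with the submanifold $S$ and once with the submanifold $\{0\}$. This produces two subsets of $\mathcal F$: the set $A$ of $\lambda$ for which $F(\cdot,\lambda)$ is transverse to $S$, and the set $B$ of $\lambda$ for which $F(\cdot,\lambda)$ is transverse to $\{0\}$; each contains a countable intersection of open dense subsets of $\mathcal F$. Since $\mathcal F$ is a finite-dimensional real vector space, hence a Baire space, $A\cap B$ again contains a countable intersection of open dense sets and is therefore dense in $\mathcal F$. So I may pick a sequence $\lambda_i\to 0$ in $A\cap B$. For each $i$, the restriction $(f_0+\lambda_i)\vert_M=F(\cdot,\lambda_i)$ is transverse to $S$, hence $f_0+\lambda_i$ is transverse to $S$ along $K$ (as $K\subseteq M$ and transversality along $K$ only involves values near $K$); it is also transverse to $\{0\}$, which, since $m<n$, forces $f_0+\lambda_i$ to be nonvanishing on $M$, in particular on $K$. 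Finally, since the inclusion $\mathcal F\hookrightarrow\E$ is continuous for the smooth weak topology (as noted in the proof of Proposition~\ref{thm:propdense}), $\lambda_i\to 0$ in $\E$, so $f_0+\lambda_i\to f_0$ in $\E$, which is what we wanted.

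The only genuine subtlety, and the step I would flag as the potential obstacle, is the \emph{simultaneity} of the two conditions: one cannot simply intersect the dense set ``transverse to $S$'' from Proposition~\ref{thm:propdense} with a dense set ``no zeroes,'' because an intersection of two merely dense sets need not be dense. The fix is to notice that the parametric transversality theorem yields not just dense but \emph{residual} parameter sets and that $\mathcal F$ is Baire, so the two residual sets still intersect in a dense set. Everything else --- smoothness of $F$, the submersion computation, the identification of ``no zeroes'' with transversality to $\{0\}$ using $m<n$, and the transfer of convergence from $\mathcal F$ to $\E$ --- is routine and mirrors the arguments already given.
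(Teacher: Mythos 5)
Your proof is correct, but it combines the two conditions differently from the paper, and the difference is worth noting. The paper does not redo the parametric argument for $\{0\}$: it observes that $\{0\}$ is a \emph{closed} submanifold of $\R^n$, so Corollary~\ref{thm:coropendense} (i.e.\ Proposition~\ref{thm:propopen} plus Proposition~\ref{thm:propdense}) already gives that the set $\mathfrak V$ of maps with no zeroes in $K$ is \emph{open and dense} in $\E$, while Proposition~\ref{thm:propdense} gives that the set $\mathfrak T$ of maps transverse to $S$ along $K$ is dense; the intersection of an open dense set with a dense set is dense in any topological space, which settles the ``simultaneity'' issue you rightly flag. You instead push both conditions down to the finite-dimensional parameter space $\mathcal F$, use that Theorem~\ref{thm:transversality} yields residual (not merely dense) parameter sets, and invoke the Baire property of $\mathcal F\cong\R^p$ to intersect them. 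Both resolutions are sound; yours has the small advantage of not needing the openness statement for this lemma at all (it perturbs $f_0$ once and gets both properties simultaneously), while the paper's is more modular, reusing Corollary~\ref{thm:coropendense} as a black box and avoiding a second pass through the parametric transversality theorem. All the supporting steps in your argument (the submersion computation, the identification of ``no zeroes on $K$'' with ``transverse to $\{0\}$ along $K$'' via $m<n$, and the continuity of the inclusion $\mathcal F\hookrightarrow\E$) check out.
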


\begin{proof}
Since $m<n$, a map $f\in C^\infty(\R^m,\R^n)$ is transverse to the submanifold $\{0\}\subset\R^n$ along $K$ if and only
if $f$ has no zeroes in $K$. Thus, by Corollary~\ref{thm:coropendense}, the set $\mathfrak V$ of those
maps in $\E$ that have no zeroes in $K$ is open and dense in $\E$.
Moreover, by Proposition~\ref{thm:propdense}, the set $\mathfrak T$ of maps in $\E$ that are transverse to $S$ along $K$
is dense in $\E$. Hence the intersection $\mathfrak V\cap\mathfrak T$
is dense in $\E$.
\end{proof}

\begin{proof}[Proof of Theorem~\ref{thm:E}]
Follows directly from Lemmas~\ref{thm:lemmaopen} and \ref{thm:lemmadense}.
\end{proof}

\begin{proof}[Proof of Theorem~\ref{thm:Dirac}]
Set $m=4,n=8$, and $\E=\D_A$. In order to verify the hypothesis $\E(x)=\CCC^4$ for any given space-time point $x=(s,\vq)$, let an arbitrary $\phi\in \CCC^4$ be given and regard $s$ as the ``initial'' time for which we can, according to the standard solution theory of the Dirac equation \cite{Chernoff}, choose the ``initial'' wave function $\psi(s,\cdot)$ arbitrarily (among the smooth and square-integrable functions) and obtain a unique $\psi\in \D_A$ by solving the Dirac equation. It is clear that there are smooth and square-integrable functions $f:\RRR^3\to\CCC^4$ such that $f(\vq)=\phi$.
The conclusion now follows from Theorem~\ref{thm:E}.
\end{proof}

\begin{cor}\label{cor:Gdelta}
Let $A$ be a smooth potential, $\D_A$ be defined as in the statement of Theorem~\ref{thm:Dirac} and $S$
be a submanifold of $\C^4$ whose closure is contained in $S\cup\{0\}$. The subset of functions $\psi\in\D_A$ that have
no zeroes and are transverse to $S$ is a dense $G_\delta$ set in $\D_A$ relative to the smooth weak topology.
In particular, the subset of functions $\psi\in\D_A$ that are transverse to $S$ contains a dense $G_\delta$ set
with respect to the smooth weak topology.
\end{cor}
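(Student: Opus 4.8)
The plan is to derive Corollary~\ref{cor:Gdelta} from Theorem~\ref{thm:Dirac} by a standard exhaustion argument, with one twist that is forced on us because the smooth weak topology is \emph{not} complete on $\D_A$ (a sequence of solutions of the Dirac equation whose $L^2$-norms blow up can still be Cauchy for the smooth weak topology), so Baire's theorem cannot be invoked for $\D_A$ directly. First I would fix compact sets $K_1\subseteq K_2\subseteq\cdots$ with $K_j$ contained in the interior of $K_{j+1}$ and $\bigcup_j K_j=\RRR^4$ (say closed balls of radius $j$), and put
\[
\mathcal O_j=\bigl\{\psi\in\D_A:\psi\text{ has no zeroes in }K_j\text{ and is transverse to }S\text{ along }K_j\bigr\}.
\]
Theorem~\ref{thm:Dirac} says that each $\mathcal O_j$ is open and dense in $\D_A$ for the smooth weak topology. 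Since every point of $\RRR^4$ lies in some $K_j$, the set $\bigcap_j\mathcal O_j$ is exactly the set of $\psi\in\D_A$ that have no zeroes and are transverse to $S$; being a countable intersection of sets open in the smooth weak topology, it is a $G_\delta$ set. Hence it only remains to prove that $\bigcap_j\mathcal O_j$ is dense.

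For the density I would pass to the finer topology $\tau_+$ on $\D_A$ generated by the smooth weak seminorms together with the one extra seminorm $\psi\mapsto\Vert\psi(t_1,\cdot)\Vert_{L^2(\RRR^3,\CCC^4)}$. This is still defined by countably many seminorms, hence metrizable, and $(\D_A,\tau_+)$ is \emph{complete}: a $\tau_+$-Cauchy sequence $(\psi_k)$ converges in $C^\infty(\RRR^4,\CCC^4)$ to some smooth $\psi$ (the space of smooth solutions of the Dirac equation being the kernel of a continuous linear operator, hence closed), while $\bigl(\psi_k(t_1,\cdot)\bigr)$ is Cauchy, hence convergent, in $L^2$; the two limits agree almost everywhere, so $\psi(t_1,\cdot)\in L^2$ and therefore $\psi\in\D_A$ (with $L^2$ data at every time, by the unitarity of the Dirac evolution \cite{Chernoff}), with $\psi_k\to\psi$ in $\tau_+$. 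Thus $(\D_A,\tau_+)$ is a Fréchet space, in particular a Baire space.

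Next I would check that each $\mathcal O_j$ is still open and dense in $(\D_A,\tau_+)$. Openness is immediate because $\tau_+$ is finer than the smooth weak topology. For density I reuse the proof of Theorem~\ref{thm:Dirac}: its density half (Corollary~\ref{thm:corfinitesubspace}, Proposition~\ref{thm:propdense}, Lemma~\ref{thm:lemmadense}) produces, near any $\psi_0\in\D_A$, approximants of the form $\psi_0+\lambda$ with $\lambda$ ranging over a \emph{fixed finite-dimensional} subspace of $\D_A$ (the one associated with $K_j$), and the values of $\lambda$ making $\psi_0+\lambda$ both zero-free on $K_j$ and transverse to $S$ along $K_j$ can be taken arbitrarily close to $0$ in that finite-dimensional space. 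Since on a finite-dimensional space the (Hausdorff vector) topology is unique, these $\lambda$ are then small for \emph{all} seminorms of the $\tau_+$ family at once; hence $\mathcal O_j$ is $\tau_+$-dense. By Baire's theorem in $(\D_A,\tau_+)$ the intersection $\bigcap_j\mathcal O_j$ is $\tau_+$-dense, and since $\tau_+$ is finer than the smooth weak topology it is a fortiori dense for the smooth weak topology. Together with the first paragraph this proves that $\bigcap_j\mathcal O_j$ is a dense $G_\delta$ set, which is the first assertion; the ``in particular'' clause is then trivial.

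I expect the only genuine obstacle to be the failure of completeness of the smooth weak topology on $\D_A$, which blocks a direct appeal to Baire's theorem; the remedy is the observation that the density half of Theorem~\ref{thm:Dirac} is proved using perturbations drawn from a \emph{finite-dimensional} subspace, and these are automatically small for the single extra ($L^2$) seminorm that is needed to restore completeness. Everything else — the exhaustion, the identification of $\bigcap_j\mathcal O_j$ with the set in the statement, and the transfer of density from $\tau_+$ back down to the smooth weak topology — is routine.
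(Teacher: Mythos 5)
Your proposal is correct and follows essentially the same route as the paper: exhaust $\RRR^4$ by closed balls, note each set is open and dense by Theorem~\ref{thm:Dirac}, and recover density of the countable intersection via Baire's theorem after refining the smooth weak topology by the single $L^2$ seminorm of the initial data to make $\D_A$ a Fr\'echet (hence Baire) space, using that the perturbations in the density argument live in a finite-dimensional subspace and are therefore small for the refined topology as well. No gaps.
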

\begin{proof}
For each $r>0$, let $B_r\subset\R^4$ denote the closed ball of radius $r$ centered at the origin and let
$\mathfrak T_r$ denote the set of maps $\psi\in\D_A$ that have no zeroes in $B_r$ and that are transverse to $S$ along
$B_r$. The set of maps $\psi\in\D_A$ that have no zeroes and are transverse to $S$ equals the intersection
$\mathfrak T=\bigcap_{r=1}^\infty\mathfrak T_r$. Moreover, by Theorem~\ref{thm:Dirac}, each $\mathfrak T_r$ is open
and dense in $\D_A$ with respect to the smooth weak topology, so that $\mathfrak T$ is a $G_\delta$ set. The fact
that $\mathfrak T$ is dense in $\D_A$ would follow if we knew that $\D_A$ is a Baire space. Since this is not clear,
we use the following trick: let $\tau$ be the topology in $\D_A$ defined by the family consisting of the semi-norms \eqref{eq:seminorms}
that define the smooth weak topology {\em and\/} the norm:
\[
\Vert\psi\Vert=\text{$L^2$-norm of the map $\psi(t=0,{\cdot})$}.
\]
It is easy to check that, endowed with $\tau$, the space $\D_A$ is a Frech\'et space (i.e., a Hausdorff
complete topological vector space whose topology is defined by a countable family of semi-norms), so in particular
it is a complete metric space and thus a Baire space. Since the topology $\tau$ is finer than the smooth weak topology,
$\mathfrak T_r$ is also open with respect to $\tau$. Moreover, we can use Proposition~\ref{thm:propdense} to conclude
that $\mathfrak T_r$ is dense in $\D_A$ with respect to $\tau$. Therefore $\mathfrak T$ is dense in $\D_A$ with respect to
$\tau$ and hence also with respect to the smooth weak topology, which is coarser than $\tau$.
\end{proof}

\section{Variations of the Question in the Title}

A natural further question involves conditionalizing on the result of an observation: Can we arrange a situation between time $t_1$ and $t_2$ (i.e., prepare an initial wave function at $t_1$ and create an external field $A_\mu$ between $t_1$ and $t_2$) such that, if a certain measurement at time $t_2$ comes out favorably, probability was high that the electron reached speed $c$ at some time between $t_1$ and $t_2$? Again, the answer is negative. To be sure, we could try measure the electron's position $\vQ(t_2)$ and consider those outcomes as favorable that correspond (as we can compute if we know $\psi(t_1,\cdot)$ and $A_\mu$) to world lines that have intersected $\Sigma$; then favorable outcomes would indeed imply that the electron has passed through $\Sigma$, yet favorable outcomes would have probability zero. More generally, if the electron has (unconditional) probability zero to pass through $\Sigma$, then conditionalizing on any event of nonzero probability will not change that.

It may be reasonable to consider only initial wave functions without contributions of negative energy, i.e., $\psi(t_1,\cdot)\in \Hilbert_+$ with
$\Hilbert_+\subset L^2(\RRR^3,\CCC^4)$ the subspace of positive energy for the Hamiltonian $H_0$ corresponding to the Dirac equation with $A_\mu=0$.
Among such initial data it is again generic that $\psi$ is transverse to $S$ (and thus $p=0$) along any compact subset $K\subset \RRR^4$. This follows from
Theorem~\ref{thm:E} in the same way as Theorem~\ref{thm:Dirac} by taking $\E$ to be the space of smooth solutions $\psi$ of the Dirac equation with
$\psi(t_1,\cdot)\in\Hilbert_+$. The fact that $\E(x)=\CCC^4$ for all $x\in\RRR^4$ is proven as follows: first, observe that using
linear combinations of the four plane waves \eqref{eq:fourwaves} one obtains solutions $\psi$ without contributions of negative energy and having an arbitrary
given value at a given point $x\in\RRR^4$. Such solutions $\psi$ are not square integrable, but we can take care of this using the following trick:
the plane waves \eqref{eq:fourwaves}
are Fourier transforms of delta functions. By considering Fourier transforms of continuous compactly supported approximations of delta functions, one sees
that the plane waves \eqref{eq:fourwaves} are pointwise limits of elements of $\E$. This proves that $\E(x)$ is a dense subspace of $\CCC^4$ and hence that
$\E(x)=\CCC^4$.

Another remark concerns the mass $m$ that appears in the Dirac equation. It played no role in our reasoning whether $m>0$ or $m=0$. As a consequence, also for a massless spin-$\tfrac12$ particle (which apparently does not exist in nature), governed by the Dirac equation with $m=0$, the Bohmian trajectory generically has probability zero to ever reach speed $c$. This conclusion is perhaps surprising because classical massless particles always move at speed $c$, and because the plane wave solutions to the massless Dirac equation always have wave vectors $k^\mu$ which are lightlike (while for $m>0$ they are timelike), so that, if $\psi$ is a plane wave, the Bohmian particle always moves at speed $c$. Of course, plane waves are very exceptional functions.

Ironically, it also follows from Theorem~\ref{thm:E}, in very much the same way as for a spin-$\tfrac12$ particle, that Slater's \emph{photon} trajectories \cite{Sla75}, defined by taking the wave function to be mathematically equivalent to a classical Maxwell field $(\boldsymbol{E},\boldsymbol{B})$, replacing the Dirac equation with the source-free Maxwell equations, and taking the probability current to be $j^\mu=\bigl(\tfrac12(\boldsymbol{E}^2+\boldsymbol{B}^2),\boldsymbol{E}\times\boldsymbol{B}\bigr)$, generically have probability zero to ever reach the speed $c$ that we call the speed of light.

Also the question arises whether there exist any wave functions $\psi$ at all, even if exceptional, such that $\Sigma=\psi^{-1}(S)$ is 4-dimensional, i.e., contains an open set---so that there is positive probability the electron spends a
positive amount of time at speed $c$. In case $m=0$ the answer is obviously yes, as then the plane waves are such that $\psi(\RRR^4) \subseteq S$. For $m>0$, here is an example with $\psi(\RRR^4)\subset S$:
\be\label{example}
\psi(t,\vq) =
\cos \omega t \begin{pmatrix}1\\1\\1\\-1\end{pmatrix}
+\sin\omega t \begin{pmatrix}-i\\-i\\+i\\-i\end{pmatrix}
\ee
with $\omega = mc^2/\hbar$. This $\psi$ solves the Dirac equation (in the Dirac representation \cite{Gamma}) with $A_\mu(x)=0$ and yields the current
\be
j = 4\bigl(1,0,-\sin 2\omega t,\cos 2\omega t\bigr)\,,
\ee
which is lightlike. The Bohmian trajectory is:
\be
\vQ(t) = \text{const.} + \frac{c}{2\omega} (0,\cos 2\omega t, \sin 2\omega t)\,,
\ee
i.e., the electron moves along a circle in the $yz$-plane at speed $c$. (The example \eqref{example} is not square-integrable, but we can easily turn it into a square-integrable example by choosing the initial wave function at $t=0$ so that it coincides with \eqref{example} at $t=0$ in the ball of radius $R>0$ around the origin in $\RRR^3$ (and outside so that it is smooth and square-integrable), and evolving it with $A_\mu(x)=0$. By the finite propagation speed of the Dirac equation, the wave function will coincide with \eqref{example} on the ``diamond-shaped'' open set
\be
\bigl\{(t,\vq)\in\RRR^4:-R/c<t<R/c,|\vq|<R-c|t|\bigr\}\,,
\ee
which is therefore contained in $\Sigma$, as desired.)

A last variant of our question concerns a system of $N>1$ entangled electrons: Can we prepare an initial wave function $\psi(t_1,\cdot):(\RRR^3)^N\to(\CCC^4)^{\otimes N}$ and create a field $A_\mu$ between $t_1$ and $t_2$ such that, with nonzero probability, at least one of the $N$ electrons will reach speed $c$ at some time between $t_1$ and $t_2$? Here, we refer to the following $N$-particle version of Bohmian mechanics: Let $\psi$ evolve according to the Hamiltonian $H=H_1+\ldots+H_N$, where each $H_k$ is a copy of the Hamiltonian of the Dirac equation acting on the $k$-th particle; that is, the $N$ electrons do not interact. Furthermore, let $\vQ_k(t)$ denote the position of the $k$-th particle at time $t$, and let it move according to
\begin{equation}\label{NBohm}
\frac{d\vQ_k}{dt} = c\frac{\psi^\dagger \valpha_k \psi}{\psi^\dagger \psi}
\bigl(t,\vQ_1(t),\ldots,\vQ_N(t)\bigr)\,,
\end{equation}
where $\valpha_k$ is $\valpha$ acting on the spinor index of the $k$-th particle. The wave function must be anti-symmetrized because electrons are fermions.

Again, the answer is negative. This can be proved from Theorem~\ref{thm:E} in much the same way as for $N=1$ after noting that the $k$-th particle reaches the speed of light, $|d\vQ_k/dt|=c$, if and only if
\be\label{Nc}
\psi(t,\vQ_1(t),\ldots,\vQ_N(t)) \in \tilde{S}_k
= \biggl(\,\bigcup_{\omega\in \SSS^2}
(\CCC^4)^{\otimes(k-1)}\otimes (S_\omega\cup\{0\}) \otimes(\CCC^4)^{\otimes (N-k)}\biggr)\setminus\{0\}\,,
\ee
where $S_\omega$ is, as in \eqref{Somega}, the set of spinors in $\CCC^4\setminus\{0\}$ such that the associated
Bohmian velocity is $c\omega$; recall that $S_\omega\cup\{0\}$ is a complex subspace of $\CCC^4$ of dimension 2.
The same arguments as before show that the set $\tilde{S}_k$ is a real (embedded) submanifold of codimension $4^N-2$
and that generically $\psi$ will be transverse to $\tilde{S}_k$ for each $k=1,\ldots, N$ along any compact set $K\subset \RRR\times \RRR^{3N}$ that does not
intersect the diagonal
\[\RRR\times \{(\vq_1,\ldots,\vq_N): \vq_i=\vq_j \text{ for some }i\neq j\}.\]
For $N\ge2$, since $4^N-2$ is larger than the dimension $3N+1$ of the domain of $\psi$, it follows that for generic
$\psi$ the sets $\psi^{-1}(\tilde{S}_k)\cap K$, $k=1,\ldots,N$ are empty. Furthermore, we believe that the negative answer does not depend on
whether or not we allow $\psi$ to have contributions of negative energy.

\section{Comparison with Orthodox Quantum Mechanics}

In orthodox quantum mechanics, it is not possible to ask a question corresponding to the one in the title. Let us explain. One may think of a \emph{quantum measurement} corresponding to an observable representing instantaneous velocity. (Arguably, this observable would be, in non-relativistic quantum mechanics, $\hat\vv=\hat\vp/m$, i.e., the ``momentum observable'' $\hat\vp=-i\hbar\nabla-e \vA$ with $A_\mu = (A_0,\vA)$ divided by the mass $m$ of the particle; in relativistic quantum mechanics it would be $\hat \vv = c^2 \hat\vp\hat H^{-1}$ with $\hat H$ the Hamiltonian, provided that each component of $\hat\vp$ commutes with $\hat H^{-1}$.) In practice, though, experiments that ``measure the momentum observable'' are carried out by measuring the position several times. Concretely, e.g., the position could be measured at times $t$ and $t'$, with results $\vq$ and $\vq'$ and respective inaccuracies $\delta q$ and $\delta q'$. The velocity would then be inferred to be
\be\label{vdef}
\vv=(\vq'-\vq)/(t'-t)
\ee
with inaccuracy $(\delta q'+\delta q)/(t'-t)$. But clearly, that vector $\vv$ must not be regarded as an \emph{instantaneous velocity} but rather as a time-averaged velocity. After all, in any theory in which particles have trajectories, and in which $\vq$ and $\vq'$ are the exact positions at times $t$ and $t'$, respectively, \eqref{vdef} would define the average velocity in the time interval $[t,t']$. As a consequence, even if speed $c$ were reached for one instant but not more, $|\vv|$ would be strictly less than $c$, and knowing $\vv$ would permit no conclusion about whether or not $c$ was reached.

Moreover, even if somehow there existed an experiment that could be regarded as a quantum measurement of instantaneous velocity, this would only permit us to ask
\be\label{qt}
\text{for one specific time $t$ whether speed}(t)=c,
\ee
but not
\be\label{qet}
\text{whether $\exists t\in[t_1,t_2]$ such that speed}(t)=c.
\ee
In the Bohmian framework, question \eqref{qet} can be asked and indeed is more relevant, as there are situations in which the probability of
the event speed$(t)=c$ vanishes for every $t$ while with nonzero probability there is $t\in [t_1,t_2]$ with speed$(t)=c$ (e.g., when $\Sigma$ is a timelike hypersurface). However, question \eqref{qet} cannot be asked in the orthodox framework, as quantum observables are not regarded as having values except when they are measured.

As a last remark, readers may be puzzled by the impression that the following three facts together lead to a contradiction: (i)~Bohmian mechanics makes the same testable predictions as standard quantum mechanics; (ii)~Bohmian mechanics implies that an electron can reach the speed of light; (iii)~the spectrum of the momentum four-vector observable of a relativistic electron is a ``mass shell'' hyperboloid, and in particular is disjoint from the light cone. Here is why no contradiction arises: What the experiments that are usually called ``quantum measurements of the momentum observable'' actually measure, in a world governed by Bohmian mechanics, is mass times the \emph{asymptotic velocity} $\lim_{t\to\infty} (\vQ(t)-\vQ(0))/t$, which is different from the \emph{instantaneous velocity} $\lim_{t\to 0} (\vQ(t)-\vQ(0))/t$.
The latter, but not the former, reaches $c$. (An example is provided by the wave function \eqref{example}: If we ignore that it is not square integrable then standard quantum mechanics predicts that a ``quantum measurement of the 3-momentum observable'' will yield 0 with certainty,
and in fact $\lim_{t\to\infty} (\vQ(t)-\vQ(0))/t=0$ for every trajectory, while the instantaneous speed is always $c$.)

\bigskip

\textit{Acknowledgments.}
R.T.\ thanks Frank Loose (T\"ubingen) and Feng Luo (Rutgers) for helpful discussions. This research was supported by grant RFP1-06-27 from The Foundational Questions Institute (fqxi.org).

\end{document}